\documentclass[twocolumn]{autart}    
\usepackage{amsfonts}
\usepackage{xcolor}
\usepackage{graphicx}          
\usepackage{epstopdf}
\usepackage{amsmath}
\usepackage{amssymb}
\usepackage{bm}
\usepackage[numbers,sort&compress]{natbib}
\usepackage{algpseudocode,algorithm}
\usepackage{multirow}
\usepackage{subfigure}
\usepackage{enumerate}
\usepackage{textcomp}
\usepackage{mathrsfs} 
\usepackage{setspace}
\usepackage{nomencl}
\usepackage{threeparttable}
\usepackage{comment}
\usepackage{appendix}
\usepackage{wrapfig}
\usepackage[colorlinks=true, linkcolor=blue, urlcolor=red, citecolor=black]{hyperref}

\graphicspath{{Graphics/}}
\newtheorem{lemma}{\bf Lemma}

\newtheorem{theorem}{\bf Theorem}
\newtheorem{remark}{\bf Remark}

\newtheorem{property}{Property}
\newtheorem{problem}{Problem}

\begin{document}
	\setlength{\abovedisplayskip}{3pt}
	\setlength{\belowdisplayskip}{3pt}
	\begin{frontmatter}
		
		\title{Variational Robust Kalman Filters: A Unified Framework}
		
		\thanks[footnoteinfo]{Shilei Li, Dawei Shi, and Hao Yu are with School of Automation, Beijing Institute of Technology, Beijing 100081, China (e-mail: shileili@bit.edu.cn, yuhaocsc@bit.edu.cn, daweishi@bit.edu.cn). Ling Shi is with the Department of Electronic and Computer Engineering, The Hong Kong University of Science and Technology, Kowloon, Hong Kong (e-mail: eesling@ust.hk). (Corresponding author: Dawei Shi.)}
		\author[ad1]{Shilei Li}\ead{shileili@bit.edu.cn}, 
		\author[ad1]{Dawei Shi}\ead{daweishi@bit.edu.cn},
		\author[ad1]{Hao Yu}\ead{yuhaocsc@bit.edu.cn},
		\author[ad2]{Ling Shi}\ead{eesling@ust.hk}
		
		\address[ad1]{School of Automation, Beijing Institute of Technology, Beijing 100081, China}
		\address[ad2]{Department of Electronic and Computer Engineering, The Hong Kong University of Science and Technology, Clear Water Bay, Kowloon, Hong Kong, China}
		
		\begin{keyword}
			Student's $t$-distribution, robustness and adaptability, fixed-point iteration, variational inference
		\end{keyword}
		
		\begin{abstract}
			Robustness and adaptivity are two competing objectives in Kalman filters (KF). Robustness involves temporarily inflating prior estimates of noise covariances, while adaptivity updates prior beliefs by exploiting measurements. In practical applications, both process and measurement noise can be influenced by outliers, be time-varying, or both. In this work, we propose a variational robust Kalman filter, built on a Student's $t$-distribution induced loss function and variational inference, and solved in a computationally efficient manner. We demonstrate that robustness can be understood as a prerequisite for adaptivity, making it possible to merge the above two competing goals into a single framework through a probabilistic switching rule. Additionally, our proposed filter can recover conventional KF, robust KF, and adaptive KF by tuning parameters, and can suppress both the imperfect process and measurement noise, enabling it to perform superiorly in complex noise environments. Simulations verify the effectiveness of the proposed method.
		\end{abstract}
	\end{frontmatter}
	
	\section{Introduction}
	State estimation aims to infer the latent states of a system from a mathematical model and noisy measurements, serving as a critical foundation for numerous fields, including robotics, finance, medical imaging, and meteorology~\cite{n2,n4}. The Kalman filter, co-developed by Kalman and Bucy, is a cornerstone in state estimation, providing optimal estimates in the mean squared error sense. Despite its theoretical elegance, the Kalman filter's reliance on Gaussian noise assumptions limits its efficacy in complex systems subject to imprecise noise covariance or unknown noise models~\cite{n7}.
	
	Research on filtering with non-Gaussian noise or unknown covariance is active. There are two main lines of research efforts: robustness and adaptability. The vulnerability of the Kalman filter to outliers has been found in its early history~\cite{n10}. To enhance its robustness, the $H_{\infty}$ filter, sometimes called minimax filter, was derived under the game theory framework~\cite{n11}. To combat gross errors,  some M-estimation-based estimators were developed, e.g.,  the Huber-based filter~\cite{n14}, $\ell_1$ norm-based Kalman filter~\cite{n15}. An alternative approach to design robust estimators is to exploit heavy-tailed distributions, including the Laplace-based filter~\cite{n16} and Student's $t$-based filter~\cite{n17}. Recently, the correntropy and statistical similarity measure, given their roots in the information-theoretic learning~\cite{principe2010information} and statistical learning~\cite{vapnik2013nature}, were utilized for algorithm derivation and achieving satisfactory results, including maximum correntropy Kalman filter (MCKF)~\cite{n18,li2023generalized} and statistical similarity measure based Kalman filter (SSMKF)~\cite{b12}. Both MCKF and SSMKF utilized a fixed-point iteration in optimization, which exhibited a higher computation efficiency than the gradient descent-based algorithms~\cite{kulikova2019chandrasekhar,singh2009using}. Current fixed-point convergence proofs are limited to specific robust losses. In this work, we expand this scope by providing sufficient conditions (Theorem \ref{theorem3}).
	
	Another prominent line is adaptive filtering, which can be broadly categorized into Bayesian, maximum likelihood \cite{n21}, covariance matching \cite{n22}, and variational Bayesian (VB) methods. Among these, Bayesian approaches are the most general, exemplified by techniques such as the interacting multiple model Kalman filter (IMMKF) and particle filters. However, these methods may suffer from high computational complexity. In contrast, VB methods provide a more computationally efficient alternative by approximating posterior inference through conjugate distributions~\cite{Blei03042017,b6}. A pioneering work in this area was presented by Särkkä~\cite{b6}, focusing on adaptive measurement covariance. This was later extended to include adaptive process and measurement covariance~\cite{b7,Huang2021VariationalAK}, along with nonlinear counterparts such as the unscented and cubature Kalman filters~\cite{n24,n25}. Although many variants of variational Bayesian Kalman filter (VBKF) had been developed, one of its key issues, such as the covariance tracking speed and the corresponding tracking variance, have not been explored. In this work, we provide the corresponding covariance tracking speed and reveal an inherent trade-off between convergence speed and steady state variance regarding the forgetting factor.    
	
	Many previous works concentrate on either robustness \cite{li2023generalized,b5,b9} or adaptability~\cite{b6,b8}. It is worth clarifying that the terminology ``robust adaptive" mentioned in \cite{b9} actually referred to adaptive robustness and hence falls into the first class. The VBKF~\cite{b6,b8}, although adjusting both shape and scale hyperparameters, falls into the second class. Only a limited number of studies consider the coexistence of time-varying covariance and outliers~\cite{chang2014kalman,li2016variational, Gao2021CubatureKF, Zhu2021AnAK}, with solutions including a combination of VB and robust loss techniques~\cite{chang2014kalman,li2016variational, Gao2021CubatureKF} or making joint inference on the noise types, the covariance, and the state~\cite{Zhu2021AnAK}. For instance, Chang \cite{chang2014kalman} proposed an innovation-based measure to switch between the standard Kalman filter, the fading memory Kalman filter, and the robust Kalman filter to balance between robustness and adaptability. Li et al. \cite{li2016variational} addressed this problem by designing a variational Huber-based filter by combining M-estimation with VB approximation. Gao et al.~\cite{Gao2021CubatureKF} fuses the ``robustness" and ``adaptability'' through the interacting multiple model. Zhu et al. \cite{Zhu2021AnAK} proposed an adaptive Kalman filter to handle inaccurate noise covariances and outliers by modeling the noise with Gaussian-Gamma mixture distributions, thereby jointly inferring the state, noise covariances, and hidden indicator variables within a variational Bayesian framework. 
	
	In this work, we demonstrate that the robust filter can be understood as a prerequisite of adaptive filters, allowing us to solve the robust adaptive problem in an efficient way. This avoids executing the ``robustness" and ``adaptability" in parallel as is done in ~\cite{chang2014kalman,li2016variational, Gao2021CubatureKF}, or simultaneously inferring both the noise types and the unknown covariances. Our developed filters can recover the KF, the robust KF, and the VBKF easily, formulating a unified framework. The contributions of this paper are summarized as follows.
	\begin{itemize}
		\item We provide a sufficient condition for the convergence of the fixed-point iteration under a class of robust losses in \textbf{Theorem \ref{theorem3}}. This allows us to derive a series of robust filters, including the Student’s $t$-based Kalman filter (STKF, \textbf{Algorithm \ref{AlgSTKF}}).
		\item We demonstrate that variational inference on the state vector with a point posterior distribution can be regarded as maximum a posteriori (MAP) estimation (\textbf{Theorem \ref{theoremiden}}), which inherently formulates a robust filter. Subsequently, we recover the state error covariance through the Joseph form update. This pipeline is much more computationally efficient than the conventional VB~\cite{b6}.
		\item We demonstrate that the proposed estimator can seamlessly recover the standard KF, the robust KF, the adaptive KF, or any hybrid combination through hyperparameter tuning. This provides a unified framework for versatile applications, whose effectiveness is verified through extensive simulations.
	\end{itemize}
	
	The remainder of this paper is organized as follows. Section II introduces some preliminaries.  Section III derives a family of robust filters with guaranteed convergence. Section IV develops some robust adaptive filters. Section V provides some simulations. Section VI concludes the paper.
	
	\emph{Notations}: The transpose of a matrix $A$ is denoted by $A^{T}$. $X\succ 0$ ($X \succcurlyeq 0$) denotes $X$ is positive definite (semi-positive definite) matrix. The Gaussian distribution with mean $\mu$ and covariance $\Sigma$ is denoted by $\mathcal{N}(\mu,\Sigma)$. The expectation of a random variable $X$ or random vector $\mathcal{X}$ is denoted by $\operatorname{E}(X)$ or $\operatorname{E}(\mathcal{X})$. The prior and posterior estimate of state $x$ is denoted as $x^{-}$ and $x^{+}$, respectively.
	\section{Preliminaries}
	We begin by presenting the necessary preliminaries and lemmas, followed by the problem statement.
	\subsection{The Student's $t$-distribution}
	The probability distribution functions (PDF) of Student's $t$-distribution, Gaussian distribution, and inverse Gamma distribution of random variable $X$ are given as follows:
	\begin{subequations}
		\tiny
		\begin{align}
			\operatorname{St}\left(x \mid \nu, \mu, \tau^2\right)&=\frac{\Gamma\left(\frac{\nu+1}{2}\right)}{\Gamma\left(\frac{\nu}{2}\right) \sqrt{\pi \nu \tau^2}}\left(1+\frac{1}{\nu} \frac{(x-\mu)^2}{\tau^2}\right)^{-(\nu+1) / 2}, \label{lst}\\
			\mathcal{N}(x|\mu,\tau^2)&=\frac{1}{\sqrt{2\pi}\tau}\exp\Big(-\frac{1}{2}\frac{(x-\mu)^2}{\tau^2}\Big),\label{lgau}\\
			\operatorname{Inv-Gam}(x|a,b)&=\frac{b^a}{\Gamma(a)}(1 / x)^{a+1} \exp (-b / x),
		\end{align}
		\label{pdf}
	\end{subequations}
	where $\operatorname{St}\left(x \mid \nu, \mu, \tau^2\right)$ denotes the Student's $t$-distribution with degrees of freedom (DOF) $\nu$, mean $\mu$, and scale parameter $\tau^2$, $\mathcal{N}(x|\mu,\tau^2)$ denotes the Gaussian distribution with mean $\mu$ and covariance $\tau^2$, and $\operatorname{Inv-Gam}(x|a,b)$ denotes the inverse-Gamma distribution with shape parameter $a$ and scale parameter $b$~\cite{gelman1995bayesian}.  
	\begin{property}[\cite{gelman1995bayesian}]
		\label{p1}
		The Student's $t$-distribution $\operatorname{St}(x|\nu,\\ \mu, \tau^2)$ is a compounding distribution composed of a Gaussian distribution and Inverse-Gamma distribution, i.e., it is equivalent to 
		\begin{equation}
			x\sim \mathcal{N}(\mu,\lambda),
		\end{equation}
		where $\lambda$ follows
		\begin{equation}
			\lambda \sim \operatorname{Inv-Gam}(\lambda| \frac{\nu}{2},\frac{\nu \tau^2}{2}).
			\label{plambda}
		\end{equation}
	\end{property} 
	
	Denoting $e=x-\mu$ and taking the logarithm on the right hand side of \eqref{lst} and ignoring the constant terms, we obtain the Student's $t$-induced loss $\mathcal{L}_{st}^{*}$ as follows
	\begin{equation}
		\mathcal{L}_{st}^{*}=  \frac{\nu+1}{2}\log\left(1+ \frac{e^2}{\nu\tau^2}\right).
		\label{InducedLoss}
	\end{equation}
	
	\begin{lemma}[Invariant Loss]
		\label{lemma:invariant_loss}
		Let $e = [e_1, \ldots, e_l]^T \in \mathbb{R}^l$ be an error vector with statistically independent components. Minimizing the following two loss functions yields identical optimal estimates:
		\begin{equation}
			\begin{aligned}
				\mathcal{L}_{st}^{*} &= \sum_{i=1}^{l} \frac{\nu_i+c}{2} \log\left(1+ \frac{e_i^2}{\nu_i\tau_i^2}\right), \\ 
				\mathcal{L}_{st} &= \sum_{i=1}^{l} \frac{\nu_i}{2} \log\left(1+ \frac{e_i^2}{\nu_i\tau_i^2}\right)
			\end{aligned}
		\end{equation}
		where $c$ is a constant, and $\nu_i$ are either all identical or partially tend to infinity.
	\end{lemma}
	
	The lemma can be proved by the separability of the objective function across independent components $e_i$ and the scale invariance of each decoupled term.
	
	By analogy with the obtainment of $\mathcal{L}_{st}^{*}$, we obtain the Gaussian-induced loss $\mathcal{L}_{gau}$ according to \eqref{lgau}. We compare $\mathcal{L}_{st}$ and $\mathcal{L}_{gau}$ as follows
	\begin{equation}
		\begin{aligned}
			\mathcal{L}_{st}&=  \frac{\nu}{2}\log\left(1+ \frac{e^2}{\nu\tau^2}\right),~\mathcal{L}_{gau} =  \frac{1}{2}\frac{e^2}{\tau^2}.
		\end{aligned}
		\label{loss1}
	\end{equation}
	Its influence functions, measuring the derivative of the loss with respect to the error, are given as 
	\begin{equation}
		\begin{aligned}
			\frac{\partial \mathcal{L}_{st}}{\partial e} &= \frac{\nu e}{\nu \tau^2+{e^2}},~\frac{\partial \mathcal{L}_{gau}}{\partial e}=\frac{e}{\tau^2}.
		\end{aligned}
	\end{equation}
	\begin{property}
		\label{p2}
		As $\nu \to \infty$, one has $\mathcal{L}_{st}=\mathcal{L}_{gau}$. 	As $\nu=1$, $\mathcal{L}_{st}$ becomes Cauchy loss corresponding to Cauchy distribution.
	\end{property}
	The proof is available in Appendix \ref{proofproperty2}.
	\begin{property}
		\label{p3}
		The loss $\mathcal{L}_{st}$ is convex within the region $[-\sqrt{\nu}\tau,\sqrt{\nu}\tau]$, and is concave in other regions. 
	\end{property}
	The proof is available at Appendix \ref{proofproperty3}. 
	\begin{property}
		\label{p4}
		As $\nu \to \infty$, the PDF of latent variable $\lambda$ (see \eqref{plambda}) becomes a shifted Dirac delta function $\delta(\lambda-\tau^2)$.
	\end{property}
	The proof is available at Appendix \ref{proofproperty4}. 
	
	\begin{figure}[!htp]
		\centering
		\subfigure[loss function]{
			\begin{minipage}[t]{0.45\linewidth}
				\centering
				\includegraphics[width=1.0\columnwidth]{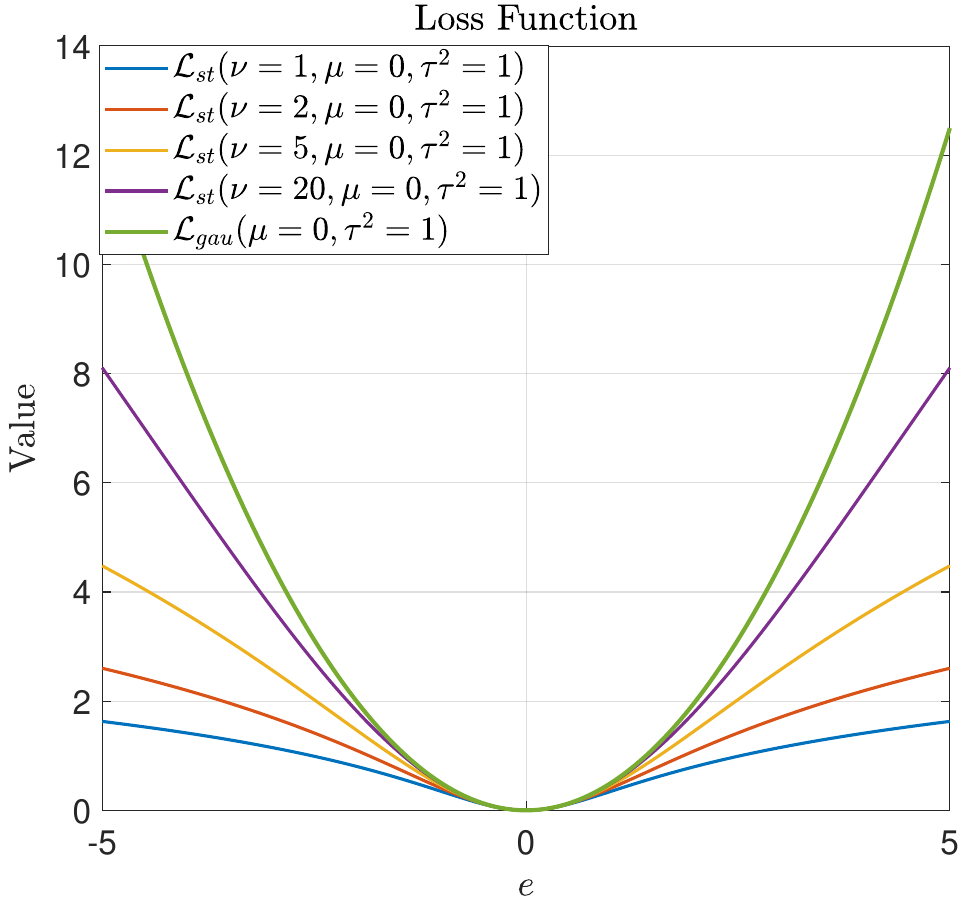}
				\label{loss}
			\end{minipage}%
		}%
		\subfigure[influence function]{
			\begin{minipage}[t]{0.45\linewidth}
				\centering
				\includegraphics[width=1.0\columnwidth]{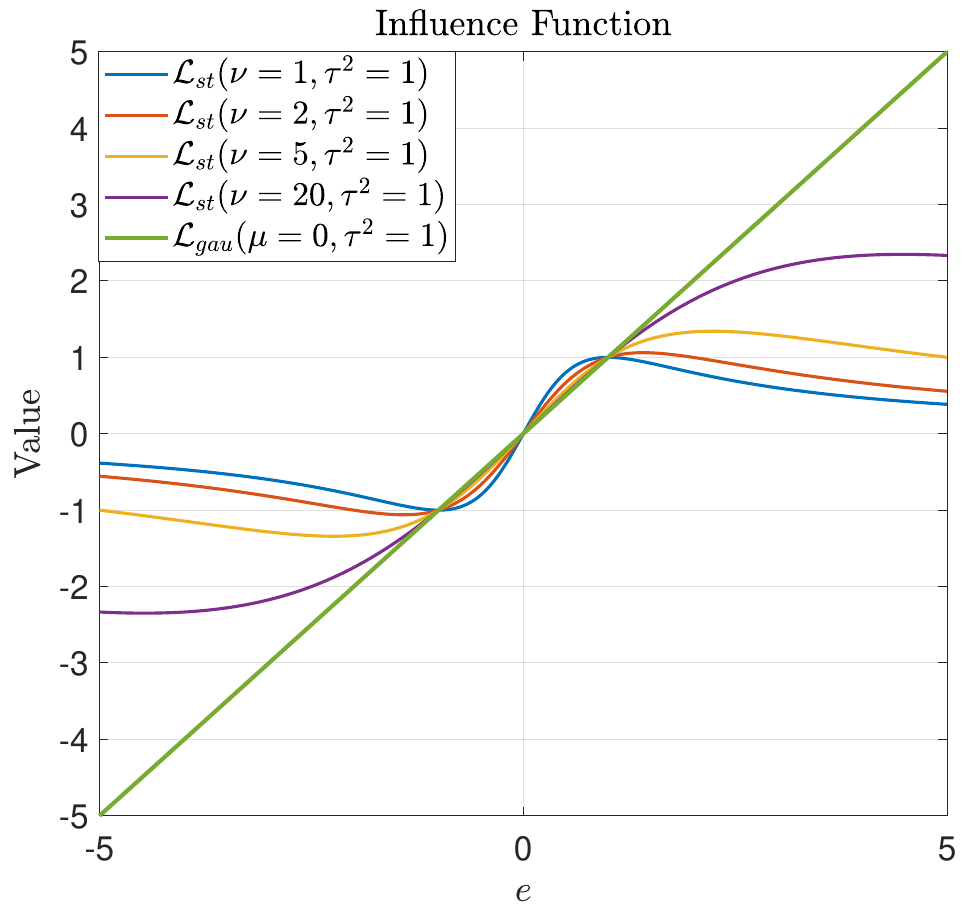}
				\label{inf}
			\end{minipage}%
		}%
		\\
		\subfigure[PDF]{
			\begin{minipage}[t]{0.45\linewidth}
				\centering
				\includegraphics[width=1.0\columnwidth]{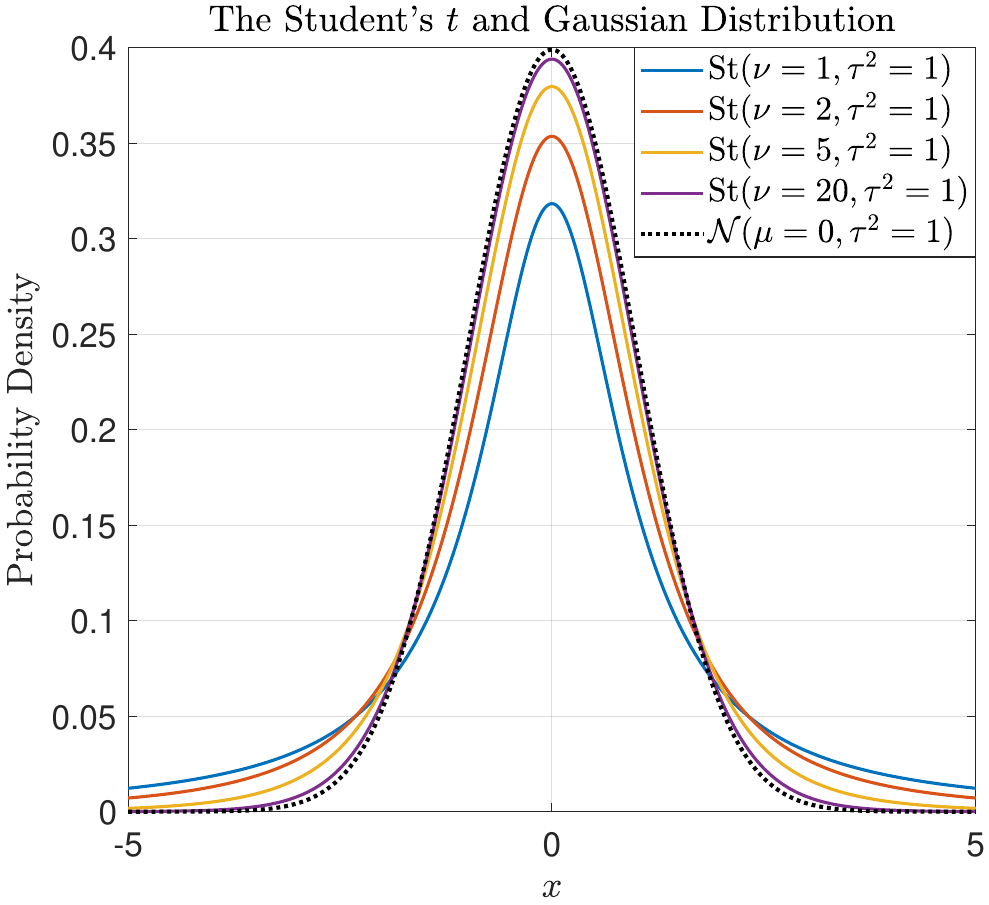}
				\label{student}
			\end{minipage}%
		}%
		\subfigure[Latent PDF]{
			\begin{minipage}[t]{0.45\linewidth}
				\centering
				\includegraphics[width=1.0\columnwidth]{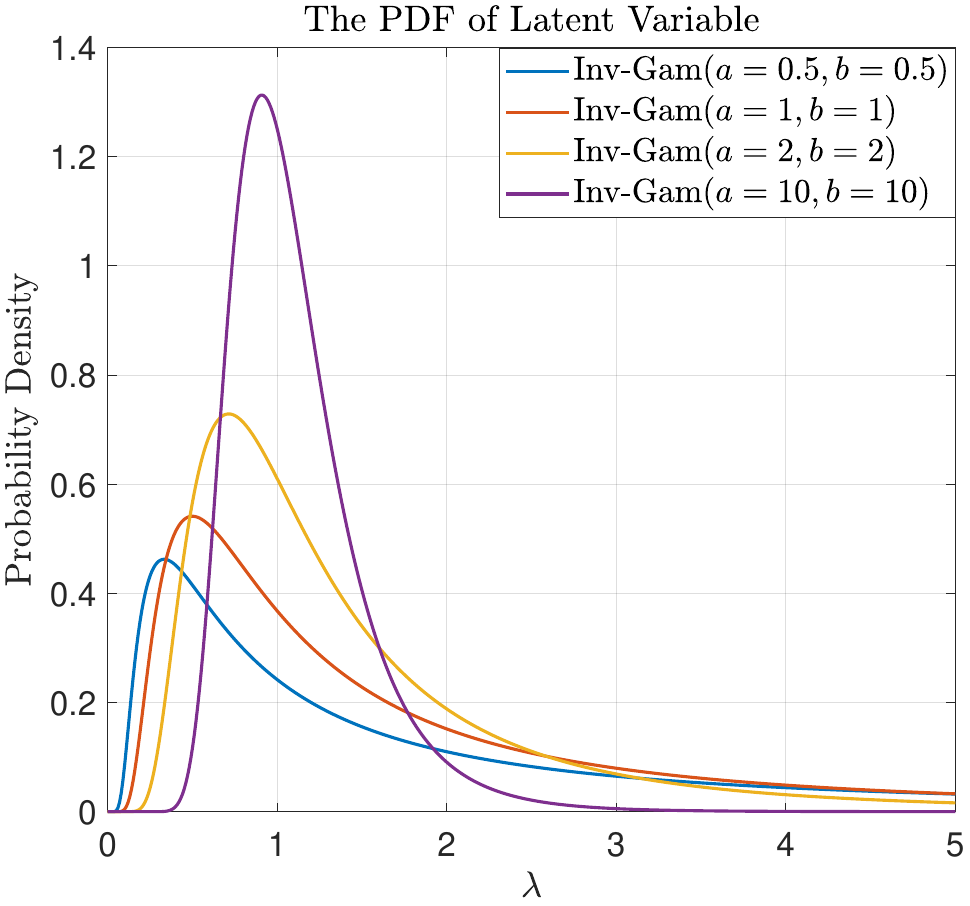}
				\label{inverse}
			\end{minipage}%
		}%
		\caption{The visualization of $	\mathcal{L}_{st}$ and $	\mathcal{L}_{gau}$ as well as their influence functions and induced PDFs. (a) The loss function of $\mathcal{L}_{st}$ and $\mathcal{L}_{gau}$. (b) The influence function of $\mathcal{L}_{st}$ and $\mathcal{L}_{gau}$. (c) The mapped Student's t distribution and Gaussian distribution. (d) The PDF of latent variable $\lambda$.}
		\label{property}	
	\end{figure}
	The loss and influence functions of $\mathcal{L}_{st}$ and $\mathcal{L}_{gau}$, as well as their corresponding PDFs and the induced latent variable's PDF in $\mathcal{L}_{st}$ are visualized in Fig. \ref{property}. One observes that the influence function of $\mathcal{L}_{st}$ has a redescending property, indicating its robustness to absolute errors that are much greater than $\sqrt{\nu}\tau$. According to Properties \ref{p1}, \ref{p2}, \ref{p3}, and \ref{p4}, we have the following insights:
	\begin{itemize}
		\label{inversegammaproperty}
		\item The parameter $\tau^2$ describes the variance of the latent Gaussian distribution.
		\item The parameter $\nu$ reflects how confident we are about the latent variance $\tau^2$.  
	\end{itemize}
	We leverage these insights in the subsequent section.
	\subsection{Problem Statement}
	\label{pd}
	We consider the following linear system:
	\begin{equation}
		\begin{aligned}
			x_{k+1}=A x_{k} + w_k \\
			y_k = C x_{k} +v_k
		\end{aligned}
		\label{sys}
	\end{equation}
	where $A \in \mathbb{R}^{n\times n}$ and $C \in \mathbb{R}^{m \times n}$ are state transfer and observation matrices. The process noise $w_k$ and measurement noise $v_k$ are uncorrelated zero-mean random noises. The initial state $x_0$ is zero-mean Gaussian with known covariance matrix $P_0$, and is independent of $w_k$ and $v_k$ for all $k>0$. We consider that the noise covariance may be time-varying and occasionally contaminated by outliers, i.e., 
	\begin{equation}
		\begin{aligned}
			w_k \sim \epsilon_w p(w_{n,k}) + (1-\epsilon_w)p_{ol}(o_w),\\
			v_k\sim \epsilon_v p(v_{n,k}) + (1-\epsilon_v)p_{ol}(o_v),
			\label{probstate}
		\end{aligned}
	\end{equation}
	where $w_{n,k}\sim \mathcal{N}(0,Q_k)$ and $v_{n,k}\sim \mathcal{N}(0,R_k)$. The parameters $\epsilon_w, \epsilon_v \in (0, 1]$ denote the occurrence probabilities of the nominal noises, while $p_{w,ol}(\cdot)$ and $p_{v,ol}(\cdot)$ denote the unknown probability distributions of the outliers. The model \eqref{probstate} recovers the standard KF when $Q_k = Q$, $R_k = R$, and $\epsilon_w = \epsilon_v = 1$. It simplifies to the robust filtering problem when $Q_k \triangleq Q$, $R_k \triangleq R$ with $\epsilon_w, \epsilon_v < 1$, and it becomes the adaptive filtering problem when $\epsilon_w = \epsilon_v = 1$. 
	
	The objective of this work is to develop a unified recursive estimator capable of jointly estimating the system state $x_k$ and the time-varying nominal covariances $Q_k$ and $R_k$, while simultaneously maintaining robustness against occasional outliers.
	\section{Main Results}
	In this section, we first present a robust filter under $\mathcal{L}_{st}$ with proved convergence. Next, we establish a connection between the robust filter and the adaptive filter under the Kullback–Leibler divergence perspective. Finally, we update the hyperparameters using variational inference and design a probabilistic switching rule for scenarios where outliers and time-varying noise coexist.
	\subsection{A Class of Robust Filters}
	By regarding the prior estimate of the state $x_k^{-} = A x_{k-1}^{+}$ as a pseudo-measurement, we rewrite \eqref{sys} as
	\begin{equation}
		\begin{aligned}
			&t_k=W_k x_k+\zeta_k,
		\end{aligned}
		\label{linreg}
	\end{equation}
	with 
	\begin{equation}\nonumber
		t_k=B_k^{-1}\begin{bmatrix}
			{x_k^{-}}\\
			{y_k}
		\end{bmatrix},\quad W_k=B_k^{-1}\begin{bmatrix}
			I\\
			C
		\end{bmatrix}\\, \zeta_k=B_k^{-1}\begin{bmatrix}
			{e_{x,k}}\\
			{v_k}
		\end{bmatrix}
	\end{equation}
	where $B_kB_k^{T}=\begin{bmatrix}
		P_{k}^{-}&0\\
		0&R_k
	\end{bmatrix}$, $P_{k}^{-}$ denotes the prior error covariance, and $\zeta_k$ is the normalized measurement noise.	
	\begin{remark}
		\label{remarkmap}
		\textcolor{black}{By constructing the augmented pseudo-measurement vector $t_k$, the prior state estimate $x_k^{-}$ is embedded directly into the observation space. Statistically, solving this augmented linear regression is mathematically equivalent to maximizing the posterior distribution $p(x_k \mid y_{1:k}) \propto p(y_k \mid x_k)p(x_k \mid y_{1:k-1})$. This unifies the Bayesian prediction and update phases into a single optimization framework.}
	\end{remark}
	
	Instead of minimizing the least squares loss $\|e_k\|_{2}^{2}$ where $e_k=t_k-W_k x_k \in \mathbb{R}^{l}$ wiht $l=n+m$ as done in KF~\cite{li2023generalized}, by denoting $R_{\zeta\zeta_k}=\mathrm{E}(\zeta_k \zeta_k^{T})=\operatorname{diag}(\lambda_1,\ldots,\lambda_i,\cdots, \lambda_{l})$ and formulating $\lambda_i$ as an inverse gamma distribution, we construct the following MAP problem:
	\begin{equation}
		\begin{aligned}
			p&(x_k, R_{\zeta\zeta_k} \mid t_{1:k}) = \frac{p(x_k, R_{\zeta\zeta_k}, t_k \mid t_{1:k-1})}{p(t_k \mid t_{1:k-1})} \\
			&\propto p(t_k, \chi_k \mid R_{\zeta\zeta_k}) p(R_{\zeta\zeta_k})\\
			&=\mathcal{N}(t_k;W_k x_k, R_{\zeta\zeta_k})\prod_{i=1}^{n+m}\operatorname{IG}(\lambda_i| \frac{\nu_i}{2},\frac{\nu_i \tau_i^2}{2}).
		\end{aligned}
		\label{map}
	\end{equation}
	According to the composition property of Student's $t$-distribution and equation \eqref{InducedLoss},  taking negative logarithm on both sides of \eqref{map} and ignoring the constant term gives 
	\begin{equation}
		J=\sum_{i=1}^{l} \frac{\nu_i+1}{2} \log(1 + (e_{i,k}^{2}) / (\nu_i \cdot \tau_i^2)),
		\label{rloss}
	\end{equation}
	where $e_{i,k}=t_{i,k}-w_{i,k}x_k$. According to Lemma \ref{lemma:invariant_loss}, we reformulate \eqref{rloss} as 
	\begin{equation}
		J=\sum_{i=1}^{l} \frac{\nu_i}{2} \log(1 + (e_{i,k}^{2}) / (\nu_i \cdot \tau_i^2)).
		\label{rloss1}
	\end{equation}
	By letting $\frac{\partial J}{\partial x_k} =0$, one has
	\begin{equation}
		x_k=\big(\sum_{i=1}^l \frac{\nu_i w_i^T w_i}{\nu_i \tau_i^2+e_{i, k}^{2}}\big)^{-1}\big({\sum_{i=1}^l \frac{\nu_i w_i^T t_i}{\nu_i \tau_i^2+e_{i, k}^{2}}}\big).
		\label{stloss}
	\end{equation}
	It follows
	\begin{equation}
		x_k = f(x_k)=(W_k^{T}D_k W_k)^{-1} W_k^{T} D_k t_k,
		\label{fixedpoint}
	\end{equation}
	where $D_k=\operatorname{diag}([d_{\nu_1},d_{\nu_2},\ldots,d_{\nu_l}])$ and  
	\begin{equation}
		d_{\nu_i}(e_{i,k})=\frac{\nu_i}{\nu_i \tau_i^{2}+e_{i,k}^{2}}.
		\label{dnu}
	\end{equation}
	Note that the equation \eqref{fixedpoint} can be regarded as the least square solution of \eqref{linreg} where $E(\zeta_k^{T}\zeta_k)=D_k^{-1}=\operatorname{diag}([\lambda_1,\lambda_2,\ldots,\lambda_l])$ and
	\begin{equation}
		\lambda_i \triangleq d^{-1}_{\nu_i} = \underbrace{\tau_i^{2}}_{\text{constant}} + \underbrace{\frac{e_{i,k}^{2}}{\nu_i}}_{\text{inflation}}.
		\label{inflationeffect}
	\end{equation}
	We observe that both sides of \eqref{fixedpoint} contain $x_k$, as $e_k=t_k -W_k x_k$. Then, it is natural to apply a fixed-point iteration for the solution of \eqref{fixedpoint}. Using a similar derivation as done in \cite{li2023generalized}, we obtain the Student's $t$-based Kalman filter (STKF) summarized in Algorithm \ref{AlgSTKF}. To provide the convergence analysis of Line 10-17 in Algorithm \ref{AlgSTKF}, we provide the following lemma.
	\begin{algorithm}[bt]
		\setstretch{1.0} 
		\caption{STKF}
		\label{AlgSTKF}
		\begin{algorithmic}[1]
			\State \textbf{Step 1: Initialization}\\
			Choose $\nu_i$ and $\tau_i^2$ for channel $i$, maximum iteration number $m_\mathrm{iter}$, and a threshold $\varepsilon$.
			\State {\textbf{Step 2: State Prediction}}\\
			$\hat{{x}}_{k}^{-}=A \hat{{x}}_{k-1}^{+} $\\
			${P}_{ k}^{-}={A} {P}_{k-1 }^{+} {A}^{T}+{Q}_k$\\
			Obtain ${B}_{p}$ and ${B}_{r}$ with ${B}_{p}{B}_{p}^{T}={P}_{k}^{-}$ and ${B}_{r}{B}_{r}^{T}={R}_{k}^{*}$\\
			Obtain $t_k$ and $W_k$ through \eqref{linreg}
			\State \textbf{Step 3: State Update}\\
			$\hat{{x}}_{k,0}^{+}=\hat{{x}}_{k}^{-}$
			\While{$\frac{\left\|\hat{{x}}_{k,t}^{+}-\hat{{x}}_{k,t-1}^{+}\right\|}{\left\|\hat{{x}}_{k,t}^{+}\right\|}>\varepsilon$ or $t \le m_\mathrm{iter}$} \\
			$\hat{{x}}_{k,t}^{+}=\hat{{x}}_{k}^{-}+\tilde{K}_{k,t}({y}_k-{C}\hat{{x}}_{k}^{-})$ \Comment{$t$ starts from 1}\\
			$\tilde{{K}}_{k,t}=\tilde{P}_{ k}^{-}{C}^{T}(C\tilde{P}_{ k}^{-}{C}^{T}+\tilde{R}_{k})^{-1}$\\
			$\tilde{{P}}_{ k}^{-}={B}_{p}{{M}}_{p}^{-1}{{B}}_{p}^{T}$\\
			${M}_{p}=\operatorname{diag}(d_{\nu_i}(e_{1,k}),\ldots,d_{\nu_i}(e_{n,k}))$\\
			$\tilde{R}_{ k}={B}_{r}{{M}}_{r}^{-1}{B}_{r}^{T}$\\
			${M}_{r}=\operatorname{diag}(d_{\nu_i}(e_{n+1,k}),\ldots,d_{\nu_i}(e_{n+m,k}))$\\
			${e}_{i,k}=t_{i,k}-w_{i,k}x_{k,t-1}^{+}$ \Comment{$t_{i,k}$ is $i$-th element of $t_k$}\\
			$t=t+1$
			\EndWhile\\
			${P}_{k}^{+}=({I}-\tilde{{K}}_k {C}){{P}}_{ k}^{-}({I}-\tilde{{K}}_k {C})^{T}+\tilde{{K}}_k {R}_{k}\tilde{{K}}_k^{T}$		
		\end{algorithmic}
	\end{algorithm}  
	\begin{lemma}[\cite{chen2015convergence}]
		\label{convlemma}
		The fixed-point	algorithm \eqref{fixedpoint} converges if $\exists~ \gamma>0$ and $0<\eta<1$ such that the initial vector $\|x_{0}\|_p < \gamma$, and $\forall x \in \{x \in \mathbb{R}^{n}: \|x\|_p \le \gamma\}$, the following holds (omitting time index $k$ for simplicity)
		\begin{equation}
			\begin{aligned}
				\left\{\begin{array}{l}
					\|f(x)\|_p \le \gamma,\\ 
					\| \nabla_{x} f(x) \|_p \le \eta,
				\end{array}\right.
			\end{aligned}
			\label{convergence}
		\end{equation}
		where $\|\cdot \|_p$ denotes an $\ell_p$ norm of a vector or an induced norm of a matrix defined by $\|A\|_p = \max \limits_{\|x\|_p} \frac{\|Ax\|_p}{\|x\|_p}$ with $p \ge 1$, and $\nabla_x f(x)$ is the Jacobian matrix of $f(x)$.
	\end{lemma}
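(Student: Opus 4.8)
The statement is the Banach contraction principle localized to a closed ball, so my plan is to verify its two hypotheses—forward invariance of the ball and Lipschitz contraction with modulus $\eta<1$—and then invoke completeness. I would begin by observing that the set $S_\gamma=\{x\in\mathbb{R}^{n}:\|x\|_p\le\gamma\}$ is closed and bounded, hence a complete metric space under $\|\cdot\|_p$, and is convex because $\|\cdot\|_p$ is a norm for $p\ge 1$. The first inequality in \eqref{convergence} states exactly that $f(S_\gamma)\subseteq S_\gamma$, and since $\|x_0\|_p<\gamma$ the iterates $x_{t+1}=f(x_t)$ all remain in $S_\gamma$; thus the dynamics never leave the region on which the hypotheses are assumed.

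Next I would show that $f$ is an $\eta$-contraction on $S_\gamma$. For any $x,y\in S_\gamma$ the segment $\phi(s)=y+s(x-y)$ lies in $S_\gamma$ for $s\in[0,1]$ by convexity, so the fundamental theorem of calculus along $\phi$ gives $f(x)-f(y)=\big(\int_0^1\nabla_x f(\phi(s))\,ds\big)(x-y)$; taking $\|\cdot\|_p$ of both sides and applying the second inequality in \eqref{convergence} under the integral yields $\|f(x)-f(y)\|_p\le\eta\,\|x-y\|_p$. The Banach fixed-point theorem then produces a unique $x^{\star}\in S_\gamma$ with $f(x^{\star})=x^{\star}$, and the standard estimate $\|x_t-x^{\star}\|_p\le\eta^{t}\|x_0-x^{\star}\|_p\to 0$ establishes convergence of the fixed-point iteration to $x^{\star}$ (together with a geometric rate as a byproduct).

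The main obstacle is the mean-value representation of $f(x)-f(y)$: it requires $f$ to be continuously differentiable on a neighborhood of $S_\gamma$, which for the particular map in \eqref{fixedpoint} must be checked separately—here the denominators $\nu_i\tau_i^{2}+e_{i,k}^{T}e_{i,k}$ stay bounded away from zero, so $D_k$, and hence $f$, depend smoothly on $x_k$ as long as $W_k^{T}D_kW_k$ is invertible. One must also be careful that the matrix norm appearing in \eqref{convergence} is precisely the operator norm induced by the vector norm $\|\cdot\|_p$, so that the submultiplicative bound $\|M(x-y)\|_p\le\|M\|_p\|x-y\|_p$ used in the contraction estimate is legitimate. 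Everything else is the textbook contraction-mapping argument, and the lemma's hypotheses are exactly what is needed to run it.
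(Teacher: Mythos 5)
The paper does not prove this lemma at all---it is imported verbatim from \cite{chen2015convergence} as a known result---so there is no in-paper argument to compare against. Your proposal is the correct and standard way to establish it: the two hypotheses in \eqref{convergence} are exactly forward invariance of the closed ball $S_\gamma$ and, via the mean-value inequality along segments (legitimate because $S_\gamma$ is convex and the matrix norm is the one induced by $\|\cdot\|_p$), an $\eta$-Lipschitz bound, after which the Banach fixed-point theorem on the complete set $S_\gamma$ yields existence, uniqueness in the ball, and geometric convergence of the iterates. This is also essentially the argument used in the cited reference. Your closing caveats are well placed: continuous differentiability of $f$ on a neighborhood of $S_\gamma$ is an implicit hypothesis (satisfied here since the denominators $\nu_i\tau_i^2 + e_{i,k}^Te_{i,k}$ are bounded below by $\nu_i\tau_i^2>0$ and $W_k^TD_kW_k$ is invertible on the ball), and the submultiplicativity used in the contraction estimate does require that $\|\cdot\|_p$ on matrices be the induced operator norm, as the lemma's statement specifies.
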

	Denote the DOF vector as ${\boldsymbol{\nu}}=[\nu_1, \nu_2, \ldots, \nu_l]^{T} \in \mathbb{R}^{l}$ and the unified DOF as $\nu_1=\nu_2=\cdots=\nu_l=\nu \in \mathbb{R}$. The following Theorem \ref{theorem1} ensures the validity of first inequality in \eqref{convergence}, while Theorem \ref{theorem2} guarantees the second one.
	\begin{theorem}
		Let $f(x)$ be defined as in \eqref{fixedpoint}. If $\gamma > \xi$ and $\nu_i \ge \nu^*$ for all $i=1,\dots,l$, then $\|f(x)\|_1 \le \gamma$ for every $x \in \mathbb{R}^n$ satisfying $\|x\|_1 \le \gamma$. Here, $\xi$ is defined as
		\begin{equation}
			\xi = \frac{\sqrt{n} \sum_{i=1}^{l}\frac{1}{\tau_i^2}|t_i|\|w_i^{T}\|_1}{\lambda_{\min}\left[\sum_{i=1}^{l}\frac{1}{\tau_i^2} w_i^{T}w_i\right]}, 
			\label{xixi}
		\end{equation} and $\nu^*$ is the solution to $\phi(\nu)=\gamma$, where
		\begin{equation}
		\phi(\nu)= \frac{\sqrt{n}\sum_{i=1}^{l}|t_i|\|w_i^{T}\|_1}{\lambda_{\min}\left[\sum_{i=1}^{l} d_{\nu}(\gamma \|w_i^{T}\|_1+t_i)w_i^{T}w_i\right]}. 
		\label{con1}
		\end{equation}
		\label{theorem1}
	\end{theorem}
	The proof is available at \ref{pftheorem1}.
	\begin{theorem}
		\label{theorem2}
		Let $f(x)$ be defined as in \eqref{fixedpoint}, $\forall x \in \{x \in \mathbb{R}^{n}: \|x\|_1 \le \gamma\}$, it holds that $\|f(x)\|_1 \le \gamma$, and $\|\nabla_{x} f(x)\| \le \eta$ if 
		$\gamma > \xi$ (see \eqref{xixi}), where $\nu^{*}$ is the solution of $\phi(\nu)=\gamma$, and $\nu^{+}$ is the solution of $\psi(\nu)=\eta ~(0<\eta<1)$ with
		\begin{equation}
			\tiny
			\begin{aligned}
				{\psi}({\nu})= \frac{2\sqrt{n} \sum_{i=1}^{l}\frac{|t_i|+\gamma\|w_i^{T}\|_{1}}{\tau_i^{4}} \|w_i^{T}\|_1  \big{(}\gamma\|w_i^{T}w_i \|_1+ \|w_i^{T}t_i\|_1 \big{)}}{ \nu\lambda_{\min}[\sum_{i=1}^{l}w_i^{T}d_{\nu}(|t_i|+\gamma |w_i^{T}|_1) w_i]}.
				\label{con2}
			\end{aligned}
		\end{equation}
	\end{theorem}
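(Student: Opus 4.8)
The plan is to verify the two hypotheses of Lemma~\ref{convlemma} in turn, since the fixed-point map \eqref{fixedpoint} converges once both hold on the ball $\{\|x\|_1\le\gamma\}$. The first requirement, $\|f(x)\|_1\le\gamma$, is inherited directly from Theorem~\ref{theorem1}: the hypothesis $\nu_i\ge\max\{\nu^{*},\nu^{+}\}$ implies in particular $\nu_i\ge\nu^{*}$, so the estimate \eqref{fxb} in that proof gives $\|f(x)\|_1\le\phi(\nu^{*})=\gamma$. Hence the whole burden is the contraction estimate $\|\nabla_x f(x)\|_1\le\eta$, with the explicit target being the function $\psi(\nu)$ in \eqref{con2}.

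First I would differentiate the fixed-point map. Write $f(x)=R_{ww}^{-1}R_{wt}$ with $R_{ww}=\sum_i d_{\nu_i}(e_i)w_i^{T}w_i$ and $R_{wt}=\sum_i d_{\nu_i}(e_i)w_i^{T}t_i$, noting that the only dependence on $x$ is through the scalar residuals $e_i=t_i-w_ix$, for which $\partial e_i/\partial x=-w_i$. Using $\partial R_{ww}^{-1}=-R_{ww}^{-1}(\partial R_{ww})R_{ww}^{-1}$, for any vector $u$ with $\|u\|_1=1$ one obtains
\[
\nabla_x f(x)\,u=R_{ww}^{-1}\Big(\sum_i d_{\nu_i}'(e_i)(w_iu)\,w_i^{T}w_i\,f(x)-\sum_i d_{\nu_i}'(e_i)(w_iu)\,w_i^{T}t_i\Big),
\]
where $d_{\nu_i}'(e)=-2\nu_i e/(\nu_i\tau_i^{2}+e^{2})^{2}$ from \eqref{dnu}.

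Next I would estimate this in the induced $\ell_1$ norm factor by factor. The factor $\|R_{ww}^{-1}\|_1$ is bounded exactly as in \eqref{RWW} by $\sqrt{n}/\lambda_{\min}\big[\sum_i w_i^{T}d_{\nu_i}(|t_i|+\gamma\|w_i^{T}\|_1)w_i\big]$, using $|e_i|\le|t_i|+\gamma\|w_i^{T}\|_1$ on the ball. For the weight derivatives I would use the coarse bound $|d_{\nu_i}'(e)|=2\nu_i|e|/(\nu_i\tau_i^{2}+e^{2})^{2}\le 2|e|/(\nu_i\tau_i^{4})$, combined with $|e_i|\le|t_i|+\gamma\|w_i^{T}\|_1$, $|w_iu|\le\|w_i^{T}\|_1$, and $\|f(x)\|_1\le\gamma$ (the last again from Theorem~\ref{theorem1}). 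Collecting terms, the bracket above has $\ell_1$ norm at most $\sum_i \frac{2(|t_i|+\gamma\|w_i^{T}\|_1)}{\nu_i\tau_i^{4}}\|w_i^{T}\|_1\big(\gamma\|w_i^{T}w_i\|_1+\|w_i^{T}t_i\|_1\big)$; multiplying by the bound on $\|R_{ww}^{-1}\|_1$ and taking the supremum over $\|u\|_1=1$ gives $\|\nabla_x f(x)\|_1\le\psi(\nu)$ in the unified case $\nu_i\equiv\nu$. For the general case $\nu_i\ge\nu^{+}$ I would observe that $|d_{\nu_i}'|$ decreases and $d_{\nu_i}$ increases with $\nu_i$, so replacing every $\nu_i$ by $\nu^{+}$ only enlarges the bound, giving $\|\nabla_x f(x)\|_1\le\psi(\nu^{+})$.

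It then remains to show $\psi(\nu)=\eta$ is solvable. I would check that the numerator of $\psi$ is independent of $\nu$, while $\nu\mapsto d_\nu(z)=\nu/(\nu\tau^{2}+z^{2})$ is nondecreasing (its derivative is $z^{2}/(\nu\tau^{2}+z^{2})^{2}\ge 0$), so the denominator $\nu\,\lambda_{\min}\big[\sum_i w_i^{T}d_\nu(\cdot)w_i\big]$ is strictly increasing; hence $\psi$ is strictly decreasing on $(0,\infty)$ with $\psi(0_+)=\infty$ and $\psi(\infty)=0$ (where $d_\nu(z)\to 1/\tau^{2}$). Thus for any $0<\eta<1$ the equation $\psi(\nu)=\eta$ has a unique root $\nu^{+}$, and $\nu_i\ge\nu^{+}$ forces $\|\nabla_x f(x)\|_1\le\psi(\nu^{+})=\eta$; together with the first part this verifies both conditions of Lemma~\ref{convlemma} whenever $\gamma>\xi$ and $\nu_i\ge\max\{\nu^{*},\nu^{+}\}$. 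I expect the main obstacle to be the careful differentiation through the state-dependent weighting $D_k(x)$ and, more delicately, arranging the coarse weight bounds so that the explicit $1/\nu$ factor is retained in $\psi$ with a numerator independent of $\nu$ — this is precisely what allows $\psi$ to be pushed below an arbitrary $\eta<1$ by choosing sufficiently confident priors.
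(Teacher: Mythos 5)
Your proposal is correct and follows essentially the same route as the paper: differentiate $f=R_{ww}^{-1}R_{wt}$ via the matrix-inverse derivative rule, bound the weight derivatives by $|d_{\nu_i}'(e_i)|\le 2|e_i|/(\nu_i\tau_i^4)$ together with $|e_i|\le|t_i|+\gamma\|w_i^{T}\|_1$ and the Theorem~\ref{theorem1} estimates on $\|R_{ww}^{-1}\|_1$ and $\|f(x)\|_1$, arrive at $\psi(\nu)$, and use its monotone decrease from $\infty$ to $0$ to solve $\psi(\nu)=\eta$. Your directional-derivative formulation against a unit $\ell_1$ vector $u$ is equivalent to the paper's column-by-column bound on $\|\partial f/\partial x_j\|_1$, and your explicit monotonicity check of $d_\nu$ in $\nu$ is a slightly more careful justification of a step the paper asserts without proof.
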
	
	The proof is available at \ref{pftheorem2}.
	\begin{remark}
		Theorems \ref{theorem1} and \ref{theorem2} provide a sufficient condition for the convergence of the fixed-point iteration \eqref{fixedpoint} under dynamics \eqref{linreg} and loss \eqref{rloss1}. 
	\end{remark}
	
	We further extend Theorems \ref{theorem1} and \ref{theorem2} to a general loss $J_{\boldsymbol{\nu}}(e;\boldsymbol{\nu},\boldsymbol{\tau}^2)=\sum_{i=1}^{l}J_{\nu_i}(e_i;\nu_i,\tau_i^2)$ where $\nu_i>0$ and $\tau_i^2>0$ are the hyperparameters and $e_i \in \mathbb{R}$ is the $i$-th element of $e$. By denoting $\frac{\partial J_{\nu_i}}{\partial e_i} = d_{\nu_i}(e_i)e_i$ and $\frac{\partial d_{\nu_i}(e_i)}{\partial e_i} = \iota_{\nu_i}(e_i) e_i$, Theorem \ref{theorem3} provide a necessary condition for the convergence of a general loss using a fixed-point solution.
	\begin{theorem}
		\label{theorem3}
		Considering dynamics \eqref{linreg}, a general loss $J_{\boldsymbol{\nu}}(e;\boldsymbol{\nu},\boldsymbol{\tau}^2)=\sum_{i=1}^{l}J_{\nu_i}(e_i;\nu_i,\tau_i^2)$ can be solved by a fixed-point algorithm with guaranteed convergence if the following conditions are fulfilled:
		\begin{itemize}
			\item Condition 1: $J_{\nu_i}(e_i)$ is a non-decreasing continuous function with respect to $|e_i|$ and gives its minimum at $e_i=0$.
			\item Condition 2: $\exists \kappa \in \mathbb{R}^{+}$ so that $0 \le d_{\nu_i}(e_i)\le \kappa$. Moreover, $d_{\nu_i}(e_i)$ is an increasing function of $\nu_i$ with  $\lim \limits_{\nu_i \to 0_{+}}  d_{\nu_i}(e_i) = 0$ for any $|e_i|>0$.
			\item Condition 3: $\iota_{\nu_i}(e_i)$ is bounded for any $e_i$.
			\item Condition 4: $\nu_i > \max \{\nu^{*}, \nu^{+}\}$ for $i=1,2,\ldots,l$, where $\nu^{*}$ and $\nu^{+}$
			are constants determined analogously to the expressions in \eqref{con1} and \eqref{con2}, respectively.
		\end{itemize}
	\end{theorem}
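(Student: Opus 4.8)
The plan is to reduce the claim to verifying the two hypotheses of Lemma~\ref{convlemma} for the fixed-point map $f$ of \eqref{fixedpoint}. Condition~1 guarantees that $\partial J_{\nu_i}/\partial e_i$ vanishes at $e_i=0$ and keeps the sign of $e_i$, so the weight $d_{\nu_i}(e_i)$ implicit in $\partial J_{\nu_i}/\partial e_i=d_{\nu_i}(e_i)e_i$ is well defined and nonnegative; since $e_i=t_i-w_ix$, the stationarity equation $\partial J_{\boldsymbol{\nu}}/\partial x=-\sum_{i=1}^{l}w_i^{T}d_{\nu_i}(e_i)(t_i-w_ix)=0$ rearranges exactly into $x=f(x)=(W_k^{T}D_kW_k)^{-1}W_k^{T}D_kt_k$ with $D_k=\operatorname{diag}(d_{\nu_1}(e_1),\ldots,d_{\nu_l}(e_l))$, i.e.\ the iteration run in Algorithm~\ref{AlgSTKF}. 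It therefore suffices to exhibit $\gamma>0$ and $0<\eta<1$ with $\|f(x)\|_1\le\gamma$ and $\|\nabla_xf(x)\|_1\le\eta$ for all $x$ with $\|x\|_1\le\gamma$; the rest of the argument mirrors the proofs of Theorems~\ref{theorem1} and \ref{theorem2}, with the explicit Student's-$t$ weight \eqref{dnu} replaced by a generic $d_{\nu_i}$ obeying Conditions~2--3.

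For the first hypothesis I would reproduce the chain in the proof of Theorem~\ref{theorem1}: with $R_{ww}=W_k^{T}D_kW_k$ and $R_{wt}=W_k^{T}D_kt_k$, one has $\|f(x)\|_1\le\|R_{ww}^{-1}\|_1\|R_{wt}\|_1$, where the upper bound $d_{\nu_i}\le\kappa$ of Condition~2 gives $\|R_{wt}\|_1\le\kappa\sum_{i=1}^{l}\|w_i^{T}\|_1|t_i|$, and $\|R_{ww}^{-1}\|_1\le\sqrt{n}/\lambda_{\min}[R_{ww}]$ with $\lambda_{\min}[R_{ww}]$ bounded below by replacing each residual by its worst case $|e_i|\le|t_i|+\gamma\|w_i\|_1$ on the ball. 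This produces $\|f(x)\|_1\le\boldsymbol{\phi}(\boldsymbol{\nu})$, the analogue of \eqref{fxb}. In the unified-DOF case $\boldsymbol{\phi}$ collapses to a scalar $\phi(\nu)$ that is decreasing in $\nu$ (because $d_{\nu_i}$ increases in $\nu_i$), with $\lim_{\nu\to0^{+}}\phi(\nu)=\infty$ (because $d_{\nu_i}(e_i)\to0$, so $\lambda_{\min}[R_{ww}]\to0$) and $\lim_{\nu\to\infty}\phi(\nu)=\xi$ finite (monotone convergence of the bounded weights). Hence, provided $\gamma>\xi$, the equation $\phi(\nu)=\gamma$ has a unique root $\nu^{*}$, and $\nu_i\ge\nu^{*}$ for all $i$ forces $\|f(x)\|_1\le\phi(\nu^{*})=\gamma$ by monotonicity of $\sum_{i}w_i^{T}d_{\nu_i}(\cdot)w_i$ in each $\nu_i$.

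For the second hypothesis I would differentiate $f(x)=R_{ww}^{-1}R_{wt}$ componentwise as in the proof of Theorem~\ref{theorem2}, using $\partial R_{ww}^{-1}/\partial x_j=-R_{ww}^{-1}(\partial R_{ww}/\partial x_j)R_{ww}^{-1}$ together with $\partial d_{\nu_i}(e_i)/\partial x_j=-w_{i,j}\iota_{\nu_i}(e_i)e_i$. The boundedness of $\iota_{\nu_i}$ from Condition~3, combined with $\|f(x)\|_1\le\gamma$, $d_{\nu_i}\le\kappa$, and $|e_i|\le|t_i|+\gamma\|w_i\|_1$, bounds $\|\nabla_xf(x)\|_1$ by an analogue $\boldsymbol{\psi}(\boldsymbol{\nu})$ of \eqref{psibar}; in the unified-DOF case this reduces to a continuous decreasing $\psi(\nu)$ with $\lim_{\nu\to0^{+}}\psi(\nu)=\infty$ and $\lim_{\nu\to\infty}\psi(\nu)=0$, so for any fixed $0<\eta<1$ the equation $\psi(\nu)=\eta$ has a unique root $\nu^{+}$, and $\nu_i\ge\nu^{+}$ for all $i$ yields $\|\nabla_xf(x)\|_1\le\eta$. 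Taking $\nu_i>\max\{\nu^{*},\nu^{+}\}$ (Condition~4) makes both hypotheses of Lemma~\ref{convlemma} hold on $\{x:\|x\|_1\le\gamma\}$, so the fixed-point iteration converges and the claim follows.

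The step I expect to be the real obstacle is the lower bound on $\lambda_{\min}[R_{ww}]$: evaluating each weight at the boundary residual $|t_i|+\gamma\|w_i\|_1$ needs $d_{\nu_i}$ to be non-increasing in $|e_i|$, so that the worst case over the ball is attained on its boundary --- a property enjoyed by the Student's-$t$ weight and by the other robust weights we consider, but not literally among Conditions~1--3, so I would either add it to Condition~2 or extract a weaker monotone-in-$|e_i|$ consequence from Condition~1. A secondary point is making the limiting behaviour of $\phi$ and $\psi$, and hence the existence and uniqueness of $\nu^{*}$ and $\nu^{+}$, follow solely from the boundedness and $\nu$-monotonicity postulated in Condition~2 rather than from a closed form as in \eqref{con1}--\eqref{con2}; the roles of $\xi$ and of the requirement $\gamma>\xi$ would then be stated explicitly as part of Condition~4.
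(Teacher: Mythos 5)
Your proposal follows essentially the same route as the paper: the paper's own proof of Theorem~\ref{theorem3} is only a sketch stating that Conditions~1--3 validate \eqref{fxb} and \eqref{gra_part2} and that Condition~4 then closes the argument ``by analogy with the proofs of Theorems~\ref{theorem1} and \ref{theorem2}'', which is precisely the reduction to Lemma~\ref{convlemma} that you carry out in detail. Your closing observation is well taken --- the step bounding $\lambda_{\min}[R_{ww}]$ from below by evaluating the weights at $|t_i|+\gamma\|w_i^{T}\|_1$ does implicitly require $d_{\nu_i}$ to be non-increasing in $|e_i|$, a property satisfied by every entry of Table~\ref{costfun} but not literally implied by Conditions~1--3, so the paper's proof silently assumes it as well.
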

	The proof is available at \ref{pftheorem3}.
	Some exemplary losses $J_{\boldsymbol{\nu}}(e;\boldsymbol{\nu},\boldsymbol{\tau}^2)=\sum_{i=1}^{l}J_{\nu_i}(e_i;\nu_i,\tau_i^2)$ are listed in Table \ref{costfun}. For simplicity, we use $l=1$ so the subscript 
	$i$ is omitted. We find that the proposed STKF uses the logarithmic loss for algorithm derivation while the MKCKF~\cite{n18} and AORSE-sqrt algorithm~\cite{b12} utilize the exponential loss and square root loss as optimization objectives, respectively (these three estimators correspond to the first three losses in Table \ref{costfun}). It is worth mentioning that a different robust filter can be obtained by replacing the expression of $d_{\nu_i}(e_{i,k})$ Lines 13 and 14 with the expressions in Table \ref{costfun} (the second column) in Algorithm \ref{AlgSTKF}. 
	\begin{table*}[]
		\centering
		\caption{Some Robust Losses Fulfilling Theorem \ref{theorem3}.}
		\scalebox{0.9}{
			\begin{tabular}{cccc}
				\hline
				$J_{\nu}(e)$ & $d_{\nu}(e)$ & $\iota_{\nu}(e)$ & special cases\\
				\hline
				$\frac{\nu}{2} \log \left(1+\frac{e^2}{\nu \tau^2}\right), \nu \in (0,\infty)$ & $\frac{\nu}{\nu \tau^{2}+e^{2}}$               & $- \frac{2 \nu}{(\nu \tau^{2}+e^{2})^{2}}$  &$\lim_{\nu\to\infty}J_{\nu}(e)=\frac{e^2}{2\tau^2}$\\
				$\nu^2 \big(1-\exp(-\frac{e^2}{2\nu^2\tau^2})\big),\nu \in (0,\infty)$                                           & $\frac{1}{\tau^2}\exp(-\frac{e^2}{2\nu^2\tau^2})$          & $ - \frac{1}{\nu^2\tau^4} \exp(- \frac{e^2}{2\nu^2\tau^2})$&$\lim_{\nu\to\infty}J_{\nu}(e)=\frac{e^2}{2\tau^2}$\\
				$\frac{2-\nu}{\nu} \big(\big(\frac{e^2/\tau^2}{2-\nu}+1\big)^{\nu/2}-1\big),\nu \in (0,2)$ & $\frac{1}{\tau^2}(\frac{e^2/\tau^2}{2-\nu}+1)^{\nu/2-1}$               & $- \frac{1}{\tau^4}(\frac{e^2/\tau^2}{2-\nu}+1)^{\nu/2-2}$&$\lim_{\nu\to 2}J_{\nu}(e)=\frac{e^2}{2\tau^2}$\\ 
				$\sqrt{\nu(\nu+e^2/\tau^2)}-\nu,\nu \in (0,\infty)$&  $\frac{1}{\tau^2\sqrt{1+\frac{e^2}{\nu\tau^2}}}$  &  $-\frac{1}{\nu\tau^4}\left(1+\frac{e^2}{\nu\tau^2}\right)^{-3/2}$  &$\lim_{\nu\to\infty}J_{\nu}(e)=\frac{e^2}{2\tau^2}$ \\ 
				\hline                                         
		\end{tabular}}
		\label{costfun}
	\end{table*}
	\begin{prop}
		The STKF, MKCKF~\cite{n18}, and AORSE-sqrt~\cite{b12} are identical to the standard KF~\cite{barfoot2024state} as $\nu_i \to \infty$ for $i=1,2,\ldots,l$.
		\label{prop1} 
	\end{prop}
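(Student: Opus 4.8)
The plan is to show that each of the three robust filters collapses to the standard Kalman filter by examining the weighting matrix $D_k$ in the fixed-point equation \eqref{fixedpoint}. The key observation is that all three estimators share the common structure $x_k = (W_k^T D_k W_k)^{-1} W_k^T D_k t_k$, differing only in the functional form of the diagonal entries $d_{\nu_i}(e_{i,k})$ listed in the second column of Table \ref{costfun}. So it suffices to prove that, for each of the three loss functions, $d_{\nu_i}(e_{i,k}) \to 1/\tau_i^2$ (up to a common positive scaling) as $\nu_i \to \infty$, uniformly in $e_{i,k}$ over the relevant bounded region. Once $D_k \to \operatorname{diag}(1/\tau_1^2,\ldots,1/\tau_l^2)$, equation \eqref{fixedpoint} becomes the ordinary weighted least-squares solution of the linear regression \eqref{linreg} with weight matrix equal to the (inverse) nominal covariance, which is exactly the information-form update of the standard KF; the iteration trivially converges in one step since $f(x)$ no longer depends on $x$.

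The steps I would carry out are as follows. First, recall from \eqref{dnu} that for the STKF, $d_{\nu_i}(e_{i,k}) = \nu_i / (\nu_i \tau_i^2 + e_{i,k}^T e_{i,k})$, and note that for fixed $e_{i,k}$ this tends to $1/\tau_i^2$ as $\nu_i \to \infty$; this is essentially Property \ref{p1} restated at the level of the influence/weight function rather than the loss. Second, do the analogous limit for the MCKF exponential-loss weight and the AORSE-sqrt square-root-loss weight from Table \ref{costfun}, checking in each case that the $\nu_i$-dependence is arranged so the limit is a positive multiple of $1/\tau_i^2$ (a common multiplicative constant across channels cancels in $(W_k^T D_k W_k)^{-1} W_k^T D_k t_k$, so only the relative weights $1/\tau_i^2$ matter). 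Third, invoke the Invariant Loss lemma (Lemma \ref{invairant}) or, more directly, the continuity argument behind Theorem \ref{theorem1}: since $f$ is continuous in the entries of $D_k$ and the limiting $D_k$ is positive definite, $f(x) \to (W_k^T \bar D W_k)^{-1} W_k^T \bar D t_k$ with $\bar D = \operatorname{diag}(1/\tau_i^2)$. Fourth, identify this limit with the KF update: using the definitions $t_k = B_k^{-1}\binom{x_k^-}{y_k}$, $W_k = B_k^{-1}\binom{I}{C}$, $B_kB_k^T = \operatorname{diag}(P_k^-, R_k^*)$, and choosing $\tau_i^2$ consistent with the nominal covariances so that $\bar D$ undoes the normalization, one recovers $\hat x_k^+ = \hat x_k^- + K_k(y_k - C\hat x_k^-)$ with the standard gain, matching \cite{barfoot2024state} and also Lines 11--16 of Algorithm \ref{AlgSTKF} with $M_p, M_r \to I/\tau^2$.

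The main obstacle I anticipate is bookkeeping rather than conceptual: making the limiting argument uniform over the region $\{\|x\|_1 \le \gamma\}$ (so that the fixed point, not just the map at a fixed $x$, converges) and handling the exact normalization constants so that the limit is literally the KF gain rather than merely proportional to it. One has to be a little careful that as $\nu_i \to \infty$ the admissible radius $\gamma$ and the thresholds $\nu^*, \nu^+$ from Theorems \ref{theorem1}--\ref{theorem2} remain consistent (they do, since $\phi(\nu) \to \xi$ and $\psi(\nu) \to 0$), and that for the MCKF and AORSE-sqrt weights the scaling of $\nu_i$ inside the exponential/square-root is such that no degenerate $0$ or $\infty$ limit occurs. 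Apart from that, the proof is a direct three-line limit computation per filter followed by recognition of the weighted-least-squares-equals-KF identity, so I would keep it short.
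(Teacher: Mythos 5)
Your proposal is correct and follows essentially the same route as the paper: compute the limit of each weight function $d_{\nu_i}$ from Table \ref{costfun} as $\nu_i\to\infty$, observe that the weighting matrices $M_p,M_r$ become (scaled) identities, and conclude that the fixed-point update collapses to the standard weighted-least-squares/KF update. Your extra care about the $\tau_i^2$ normalization is in fact warranted, since the paper's displayed limits equal $1$ only under its standing convention $\tau_i^2=1$; otherwise they equal $1/\tau_i^2$ exactly as you state.
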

	The proof can be obtained by taking the limit as $\nu_i \to \infty$ in the scaling equations of the STKF, MKCKF, and AORSE-sqrt, verifying their exact algebraic reduction to the standard KF updates.
	
	\subsection{Connection with Variational Adaptive Filter}
	To build a connection between the proposed robust filter (indeed, it is a MAP estimator) and the adaptive filter, we formulate the following two problems.
	\begin{problem}[MAP Estimation]
		Given the augmented regression model \eqref{linreg} and fixed noise covariance $R_{\zeta\zeta_k}$, the STKF seeks the MAP estimate (see Remark \ref{remarkmap} and \eqref{map}) of the state $x_k$ by solving:
		\begin{equation}
			x_k^* = \arg \max_{x_k}\ln p\left(x_k, R_{\zeta\zeta_k} |t_k \right).
			\label{map1}
		\end{equation}
		the posterior error covariance is subsequently updated by the Joseph form:
		\begin{equation}
			{P}_{k}^{+}=({I}-\tilde{{K}}_k {C}){{P}}_{ k}^{-}({I}-\tilde{{K}}_k {C})^{T}+\tilde{{K}}_k {R}_{k}\tilde{{K}}_k^{T}.
			\label{Josephform}
		\end{equation}
		where $\tilde{{K}}_k$ is the associate gain by optimizing \eqref{map1}.
		\label{prob1}
	\end{problem}
	
	\begin{problem}[Variational Bayesian Inference]
		The conventional Variational Bayesian Kalman Filter (VBKF) approximates the true joint posterior by a factorized distribution $Q(x_k, R_{\zeta\zeta_k}) = Q_x(x_k)Q_R(R_{\zeta\zeta_k})$ that minimizes the Kullback-Leibler (KL) divergence \cite{b6}:
		\begin{equation}
			\mathcal{L}_{vb}=\mathrm{KL}\Big(Q_x(x_k)Q_R({R}_{\zeta\zeta_k}) \,\big\|\, p\left(x_k, R_{\zeta\zeta_k} \mid t_k\right)\Big).
		\end{equation}
		\label{prob2}
	\end{problem}
	\begin{theorem}
		\textcolor{black}{Given a fixed noise covariance $Q_R(R_{\zeta\zeta_k})$, such that the expected precision is $\mathbb{E}_{Q_R}[R_{\zeta\zeta_k}^{-1}] \triangleq \tilde{R}_{\zeta\zeta_k}^{-1}$. the optimal variational posterior $Q_x^*(x_k)$ solving Problem \ref{prob2} is a Gaussian whose mean is identical to the MAP state estimate derived in Problem \ref{prob1}, and the posterior error covariance is identical to \eqref{Josephform}.}
		\label{theoremiden}
	\end{theorem}
	\begin{pf}
		In the Variational Bayesian framework \cite{Weinstock1952CalculusOV}, minimizing the KL divergence is equivalent to maximizing the Evidence Lower Bound (ELBO), denoted as $\mathcal{L}$:
		\begin{equation}
			\ln p(t_k) = \mathcal{L} + \mathrm{KL}\Big(Q_x(x_k)Q_R(R_{\zeta\zeta_k}) \,\big\|\, p(x_k,R_{\zeta\zeta_k} \mid t_k)\Big),
		\end{equation}
		where the ELBO is defined as:
		\begin{equation}
			\begin{aligned}
				\mathcal{L} &= \mathbb{E}_{Q_x,Q_R}[\ln p(t_k, x_k, R_{\zeta\zeta_k})] \\
				&- \mathbb{E}_{Q_x}[\ln Q_x(x_k)] - \mathbb{E}_{Q_R}[\ln Q_R(R_{\zeta\zeta_k})].
			\end{aligned}
		\end{equation}
		By utilizing the augmented linear regression model \eqref{linreg}, the prior information $p(x_k)$ is intrinsically embedded into the pseudo-measurement vector $t_k$ and the augmented matrix $W_k$. Therefore, the complete data likelihood is entirely characterized by $p(t_k \mid x_k, R_{\zeta\zeta_k})$. Isolating the terms dependent on $x_k$, we define the state-dependent objective $\mathcal{L}_x$:
		\begin{equation}
			\mathcal{L}_x = \mathbb{E}_{Q_x}\Big[ \mathbb{E}_{Q_R}[\ln p(t_k \mid x_k, R_{\zeta\zeta_k})] \Big] - \mathbb{E}_{Q_x}[\ln Q_x(x_k)] + c_1.
		\end{equation}
		For the linear Gaussian observation model, the expected log-likelihood with respect to $Q_R$ has
		\begin{equation}
			\tiny
			\begin{aligned}
				\mathbb{E}_{Q_R}[\ln p(t_k \mid x_k, R_{\zeta\zeta_k})] &= -\frac{1}{2} (t_k - W_k x_k)^T \tilde{R}_{\zeta\zeta_k}^{-1} (t_k - W_k x_k)\\
				& + c_2.
			\end{aligned}
		\end{equation}
		Substituting this expected log-likelihood into $\mathcal{L}_x$ yields:
		\begin{equation}
			\begin{aligned}
				\mathcal{L}_x &= \int Q_x(x_k) \left[ -\frac{1}{2}\|t_k - W_k x_k\|^2_{\tilde{R}_{\zeta\zeta_k}^{-1}}\right] dx_k \\
				&- \int Q_x(x_k) \ln Q_x(x_k) dx_k + c_3.
			\end{aligned}
		\end{equation}
		To deduce the optimal distribution $Q_x^*(x_k)$, we expand the quadratic form inside the first integral with respect to $x_k$:
		\begin{equation}
			-\frac{1}{2} \Big( x_k^T (W_k^T \tilde{R}_{\zeta\zeta_k}^{-1} W_k) x_k - 2 x_k^T (W_k^T \tilde{R}_{\zeta\zeta_k}^{-1} t_k) \Big) + \text{const}.
		\end{equation}
		By completing the square, this expression is algebraically equivalent to the logarithm of an unnormalized Gaussian density $\ln \mathcal{N}(x_k \mid \mu^*, \Sigma^*)$, characterized by the information matrix and information vector:
		\begin{align}
			(\Sigma^*)^{-1} &= W_k^T \tilde{R}_{\zeta\zeta_k}^{-1} W_k, \label{eq:sigma_inv} \\
			(\Sigma^*)^{-1} \mu^* &= W_k^T \tilde{R}_{\zeta\zeta_k}^{-1} t_k. \label{eq:mu_inv}
		\end{align}
		Consequently, $\mathcal{L}_x$ can be elegantly formulated as the negative KL divergence between $Q_x(x_k)$ and this target Gaussian:
		\begin{equation}
			\mathcal{L}_x = -\mathrm{KL}\Big( Q_x(x_k) \,\big\|\, \mathcal{N}(x_k \mid \mu^*, \Sigma^*) \Big) + \text{c}.
		\end{equation}
		Because the KL divergence is strictly non-negative, $\mathcal{L}_x$ is maximized if and only if:
		\begin{equation}
			Q_x^*(x_k) = \mathcal{N}(x_k \mid \mu^*, \Sigma^*).
		\end{equation}
		Finally, we map the augmented matrices back to the standard Kalman state space to establish equivalence with Algorithm \ref{AlgSTKF}. Let the expected augmented precision be defined by the weight matrix $\tilde{R}_{\zeta\zeta_k}^{-1} = \operatorname{blkdiag}(M_p, M_r)$. Recalling the definitions $t_k = B_k^{-1}\binom{x_k^{-}}{y_k}$ and $W_k = B_k^{-1}\binom{I}{C}$, we substitute these into \eqref{eq:sigma_inv}:
		\begin{equation}
			(\Sigma^*)^{-1} = \begin{bmatrix} I \\ C \end{bmatrix}^T B_k^{-T} \begin{bmatrix} M_p & 0 \\ 0 & M_r \end{bmatrix} B_k^{-1} \begin{bmatrix} I \\ C \end{bmatrix}.
		\end{equation}
		Since $B_k = \operatorname{blkdiag}(B_p, B_r)$, the internal precision mapping yields the robustified prior and measurement covariances: $(B_p M_p^{-1} B_p^T)^{-1} = (\tilde{P}_k^-)^{-1}$ and $(B_r M_r^{-1} B_r^T)^{-1} = \tilde{R}_k^{-1}$. Expanding the blocks yields:
		\begin{equation}
			(\Sigma^*)^{-1} = (\tilde{P}_k^-)^{-1} + C^T \tilde{R}_k^{-1} C.
		\end{equation}
		Applying the Woodbury matrix identity, we obtain the variational covariance update:
		\begin{align}
			\Sigma^* &= \Big( (\tilde{P}_k^-)^{-1} + C^T \tilde{R}_k^{-1} C \Big)^{-1} \nonumber \\
			&= \tilde{P}_k^- - \tilde{P}_k^- C^T \left( C \tilde{P}_k^- C^T + \tilde{R}_k \right)^{-1} C \tilde{P}_k^-.
		\end{align}
		Defining the modified Kalman gain as 
		$$\tilde{K}_k = \tilde{P}_k^- C^T \left( C \tilde{P}_k^- C^T + \tilde{R}_k \right)^{-1},$$ 
		the covariance simplifies to:
		\begin{equation}
			\Sigma^* = (I - \tilde{K}_k C) \tilde{P}_k^-.
		\end{equation}
		The equivalent Joseph form is 
		\begin{equation}
			\Sigma^* = ({I}-\tilde{{K}}_k {C}){{P}}_{ k}^{-}({I}-\tilde{{K}}_k {C})^{T}+\tilde{{K}}_k {R}_{k}\tilde{{K}}_k^{T}.
		\end{equation}
		Expanding \eqref{eq:mu_inv} and left-multiplying by $\Sigma^*$ yields:
		\begin{equation}
			\mu^* = \Sigma^* (\tilde{P}_k^-)^{-1} x_k^- + \Sigma^* C^T \tilde{R}_k^{-1} y_k. \label{eq:mu_expand}
		\end{equation}
		Utilizing the identities $\Sigma^* (\tilde{P}_k^-)^{-1} = I - \tilde{K}_k C$ and $\Sigma^* C^T \tilde{R}_k^{-1} = \tilde{K}_k$, \eqref{eq:mu_expand} reduces to:
		\begin{align}
			\mu^* &= (I - \tilde{K}_k C) x_k^- + \tilde{K}_k y_k \nonumber \\
			&= x_k^- + \tilde{K}_k (y_k - C x_k^-).
		\end{align}
		The derived mean $\mu^*$ perfectly matches the MAP state update equation formulated in Algorithm \ref{AlgSTKF}. Thus, the STKF state update and Variational Bayesian inference are mathematically identical.
	\end{pf}
	
	\begin{remark}
		A byproduct of Theorem \ref{theoremiden} is that, instead of solving Problem \ref{prob2} using a set of coupled equations as shown in \cite{b6}, we solve the problem subsequently: (1) estimate the state $x_k$ using a fixed point iteration; (2) update the loss hyper-parameters and posterior error covariance (see Algorithm \ref{AlgSTKF-adap}). The two paths demonstrates highly consistent estimation accuracy (see the simulations), but our method possesses some advantages: On the one hand, it decouples the ``robustness process" and ``adaptive process", allowing us to design switching rules for complex noise scenarios. On the other hand, the decoupled method reduces algorithm complexity by calculating the posterior error covariance and update the hyper-parameter only once, rather than $N$ iterations as shown in \cite{b6}. 
	\end{remark}
	\subsection{Robust Adaptive Filtering}
	\label{robust-variational}
	According to Property \ref{p1} and Theorem \ref{theoremiden}, we have insights that optimizing \eqref{rloss1} actually jointly optimizes the state and covariance, where the diagonal element of the variance follows an inverse-Gamma distribution, which can be obtained by a fixed-point iteration as shown in Line 8-17 in Algorithm \ref{AlgSTKF}. We then focus on the update of hyper-parameters $\nu_i$ and $\tau_i^2$. Based on the variational inference~\cite{gelman1995bayesian,b6}, when applying \eqref{plambda} as a conjugate prior distribution, the  posterior hyper-parameters should be updated as
	\begin{equation}
		\begin{aligned}
			\nu_{i,k}^{+} & = \nu_{i,k}^{-} +1,\\
			\tau_{i,k}^2  & = (\textcolor{black}{\frac{\nu_{i,k}^{-}}{\nu_{i,k}^{+}}})(\tau_{i,k}^2)^{-} + \frac{e_{i,k}^{2}}{\nu_{i,k}^{+}} + \frac{[W_kP_kW_k^{T}]_{ii}}{\nu_{i,k}^{+}},
			\label{up1}
		\end{aligned}
	\end{equation}
	where $P_k$ is the posterior error covariance and $[*]_{ii}$ denotes the $i$-th diagonal element of matrix $*$. The proof of \eqref{up1} is available in Appendix \ref{vbderivation}. One observes that the update of $\nu_i$ is not related with the error $e_{i,k}$. In the iterative filtering, it is natural to assume that the hyper-parameter at the next time step is identical to the previous ones. However, this would induce unbounded $\nu_{i,k}^{+}$ and $\tau_{i,k}^2$. To alleviate this problem, we employ a forgetting factor $\rho_i \in (0,1)$ for channel $i$ in the hyper-parameter prediction step, which is motivated by \cite{b6} and given as follows: 
	\begin{equation}
		\begin{aligned}
			\nu_{i,k}^{-} &= \rho_i \nu_{i,k-1}^{+},\\
			\nu_{i,k}^{+} &= \nu_{i,k}^{-}+1.
		\end{aligned}
		\label{up2}
	\end{equation}
	
	In the extreme case $\rho=1$, the hyper-parameter update should be halted, which gives the following update:
	\begin{equation}
		\begin{aligned}
			\nu_{i,k}^{+}=\nu_{i,k-1}^{+},\quad (\tau_{i,k}^2)^{+} &=(\tau_{i,k-1}^2)^{-}= (\tau_{i,k-1}^2)^{+}.
		\end{aligned}
		\label{up3}
	\end{equation}	
	We name the new algorithm STKF-A and provide a summary of it in Algorithm \ref{AlgSTKF-adap}.
	\begin{prop}
		\label{propnew}
		The STKF-A is equivalent to the STKF when $\rho_i = 1$ for all $i = 1, 2, \ldots, l$. Additionally, it recovers the KF by setting $\rho_i = 1$ for all $i = 1, 2, \ldots, l$ and $\nu_i \to \infty$ for all $i$. 
	\end{prop}
	
	\begin{algorithm}[bt]
		\setstretch{1.0} 
		\caption{STKF-A}
		\label{AlgSTKF-adap}
		\begin{algorithmic}[1]
			\State \textbf{Initialization}\\
			Choose $\nu_i$ and $\tau_i^2$ for channel $i$, maximum iteration number $m_\mathrm{iter}$, and a threshold $\varepsilon$.
			\State \textbf{State Prediction}\\
			Run Line 4 to 7 in Algorithm \ref{AlgSTKF}.
			\State \textbf{Hyper-parameter Prediction}
			\For { $i=1$ to $l$}
			\If {$0<\rho_i<1$}
			\State Execute \eqref{up2} to obtain \textcolor{black}{$\nu_{i,k}^{+}$}  
			\ElsIf{$\rho_i=1$} 
			\State Execute \eqref{up3} to obtain \textcolor{black}{$\nu_{i,k}^{+}$} 
			\EndIf
			\EndFor
			\State \textbf{State Update}\\
			Run Line 8-17 in Algorithm \ref{AlgSTKF} to obtain $\hat{x}_k$ and $P_k$
			\State \textbf{Hyper-parameter Update}
			\For {$i=1$ to $l$}
			\If { $0<\rho_i<1$}
			\State Execute \eqref{up1} to update $(\tau_{i,k}^{2})^{+}$
			\ElsIf { $\rho_i=1$}
			\State Execute \eqref{up3} to update $(\tau_{i,k}^{2})^{+}$
			\EndIf
			\EndFor
		\end{algorithmic}
	\end{algorithm}  
	
	This can be proved by comparing different estimators by substituting the values of $\rho_i$ and $\nu_i$. We then consider the case that the noise variance is time-varying and is occasionally polluted by outliers. To avoid the harmful effects of outliers on the adaptive mechanism, we introduce a probabilistic switching rule based on a Bernoulli distribution to infer the probability of a measurement being an outlier.
	
	To mitigate the impact of outliers on the adaptive mechanism, we introduce a Bernoulli indicator $Z_{i,k} \in \{0, 1\}$ for the $i$-th channel, with prior outlier probability $P(Z_{i,k}=1)=\pi$. The likelihoods of the innovation $e_{i,k}$ under normal and outlier states are modeled as $L_0=\mathcal{N}(e_{i,k} \mid 0, S_{i,k}^{(0)})$ and $L_1=\mathcal{N}(e_{i,k} \mid 0, S_{i,k}^{(1)})$ where $S_{i,k}^{(0)}=(\tau_{i,k}^2)^{-}$ and $S_{i,k}^{(1)}=9(\tau_{i,k}^2)^{-}$. By Bayes' theorem, the posterior probability of an outlier is:$$\gamma_{i,k}=\frac{\pi L_1}{\pi L_1+(1-\pi)L_0}.$$
	The covariance update is given as
	\begin{equation}
		\begin{aligned}
		(\tau_{i,k}^2)^{+} &= (1-\gamma_{i,k}) \left( \frac{\nu_{i,k}^{-}}{\nu_{i,k}^{+}}(\tau_{i,k}^2)^{-} 
		+ \frac{e_{i,k}^{2} + [W_k P_k W_k^{T}]_{ii}}{\nu_{i,k}^{+}} \right) \\
		&+ \gamma_{i,k} (\tau_{i,k}^2)^{-}
		\label{up_soft}.
		\end{aligned}
	\end{equation}
	This continuous transition strategy, designated as STKF-AR, is summarized in Algorithm \ref{AlgSTKF-adap-robust}.

	\begin{algorithm}[bt]
		\setstretch{1.0} 
		\caption{STKF-AR}
		\label{AlgSTKF-adap-robust}
		\begin{algorithmic}[1]
			\State {Run Line 1 to 14 in Algorithm \ref{AlgSTKF-adap}}
			\State \textbf{Hyper-parameter Update}
			\For {$i=1$ to $l$}
			\If { $0<\rho_i<1$}
			\State Execute \eqref{up_soft} to obtain $(\tau_{i,k}^{2})^{+}$
			\ElsIf { $\rho_i=1$}
			\State Execute \eqref{up3} to update $(\tau_{i,k}^{2})^{+}$
			\EndIf
			\EndFor
		\end{algorithmic}
	\end{algorithm}  
	\subsection {Convergence Rate Analysis of the Unknown Covariance}
	\label{convrate}
	We focus on the steady-state behavior of $(\tau_{i,k}^{2})^{+}$. Based on \eqref{up2}, in steady state, one has
	\[
	\nu_{i,\infty}^{+} = \nu_{i,\infty}^{-}+1=\rho_i \nu_{i,\infty}^{+}+1.
	\]
	It follows that 
	\begin{equation}
		\nu_{i,\infty}^{+}=\frac{1}{1-\rho_i}, \quad \nu_{i,\infty}^{-}=\frac{\rho_i}{1-\rho_i}.
		\label{steadynu}
	\end{equation} 
	Denoting $\lambda_{i,k}^{+} \triangleq (\tau_{i,k}^{2})^{+}$ and substituting \eqref{steadynu} into \eqref{up1} yields
	\begin{equation}
		\lambda_{i,k}^{+}= \rho_i \lambda_{i,k-1}^{+} + ({1-\rho_i}) e_{i,k}^{2} + (1-\rho_i)w_{i,k}P_k w_{i,k}^{T},
		\label{propogate}
	\end{equation} 
	where $e_{i,k}=t_{i,k}-w_{i,k}\hat{x}_k$ is the measurement residual, and $w_{i,k}$ is the $i$-th row of $W_k$. Defining the realized variational error as $\mathcal{E}_{i,k} \triangleq e_{i,k}^{2}+ w_{i,k}P_k w_{i,k}^{T}$, we establish the following theorem.
	
	\begin{theorem}
		\label{theorem5}
		Suppose the forgetting factor $\rho_i \in (0,1)$ and the realized variational error sequence $\mathcal{E}_{i,k}$ is weakly stationary and uncorrelated, with mean $\mathbb{E}[\mathcal{E}_{i,k}] = \sigma_i^2$ and variance $\mathrm{Var}(\mathcal{E}_{i,k}) = \eta_i^4$. Then, the scale parameter sequence $\lambda_{i,k}^{+}$ defined by the recursive update \eqref{propogate} converges in mean and variance as $k \to \infty$. Specifically, the steady-state mean and variance are given by:
		\begin{align}
			\lim_{k \to \infty} \mathbb{E}[\lambda_{i,k}^{+}] &= \sigma_i^2, \label{thm_mean} \\
			\lim_{k \to \infty} \mathrm{Var}(\lambda_{i,k}^{+}) &= \frac{1-\rho_i}{1+\rho_i} \eta_i^4. \label{thm_var}
		\end{align}
	\end{theorem}
	The proof is available at \ref{prooftheorem5}. We then analyze the transient behavior of $\operatorname{E}(\lambda_{i,k}^{+})$. 
	
	Taking the expectation of \eqref{propogate} and substituting the stationary mean $\mathbb{E}[\mathcal{E}_{i,k}] = \sigma_i^2$ yields the recursive relation:
	\begin{equation}
		\mathbb{E}[\lambda_{i,k}^{+}] = \rho_i \mathbb{E}[\lambda_{i,k-1}^{+}] + (1-\rho_i)\sigma_i^2.
		\label{mean_recursion}
	\end{equation}
	Subtracting $\sigma_i^2$ from both sides and recursively unrolling the sequence to the initial step $k=0$ provides the explicit expression for the transient mean:
	\begin{equation}
		\mathbb{E}[\lambda_{i,k}^{+}] = \sigma_i^2 + \rho_i^k \left( \mathbb{E}[\lambda_{i,0}^{+}] - \sigma_i^2 \right).
		\label{transient_mean}
	\end{equation}
	Equation \eqref{transient_mean} characterizes the estimator's transient behavior. The term $\rho_i^k \left( \mathbb{E}[\lambda_{i,0}^{+}] - \sigma_i^2 \right)$ represents the transient bias, which decays exponentially to zero. The convergence rate is governed by the forgetting factor $\rho_i$, defining a tracking time constant 
	\begin{equation}
		\tau_i = -1/\ln(\rho_i) \approx 1/(1-\rho_i).
		\label{covnspeed}
	\end{equation} 
	\begin{remark}
		A smaller $\rho_i$ accelerates the decay of the transient bias, leading to faster convergence to the steady-state mean $\sigma_i^2$. However, recalling the steady-state variance in \eqref{thm_var}, a fundamental trade-off emerges: decreasing $\rho_i$ improves transient tracking speed but amplifies the steady-state variance $\frac{1-\rho_i}{1+\rho_i}\eta_i^4$. Conversely, a $\rho_i$ closer to $1$ yields a smoother steady-state estimate with lower variance, at the cost of a prolonged transient phase. Therefore, the selection of $\rho_i$ need balance the convergence rate against steady-state estimation precision.
	\end{remark}
	\subsection{Parameter Selection and Discussion}
	Our proposed framework relies on three channel-specific hyperparameters: the scale parameter $\tau_i^2$, the DOF $\nu_i$, and the decay coefficient $\rho_i$. Because the normalized nominal covariance is $E(\zeta_k\zeta_k^T)=I$, we set $\tau_i^2=1$. The parameter $\nu_i$ reflects confidence in the nominal covariance; smaller values increase the temporary variance \eqref{inflationeffect} to reject outliers, at the cost of degraded performance under pure Gaussian noise. Consequently, we recommend $\nu_i \to \infty$ for Gaussian channels and $\nu_i \in [0.5, 5]$ for outlier-prone channels.
	
	In the adaptive filters, the decay coefficient $\rho_i$ is crucial to prevent $\nu_{i,k}$ from growing unbounded, which would otherwise erroneously recover the standard KF. As established in Theorems 5, $\rho_i$ balances tracking speed against variance smoothness. We advise setting $\rho_i \in [0.95, 1)$. Setting $\rho_i=1$ recovers the STKF, which further simplifies to the standard KF if $\nu_i \to \infty$.
	
	Robustness and adaptability are competing objectives: robustness temporarily inflates the latent variance while keeping the prior fixed, whereas adaptability continuously updates the prior \eqref{up1}. To unify them, a probabilistic switching rule with Bernoulli indicator is utilized. An advantage of this framework is its channel-level flexibility. It allows practitioners to encode prior knowledge directly into the algorithm—for example, assigning static robust parameters (small $\nu_i$, $\rho_i=1$) to channels with frequent outliers, while enabling dynamic adaptation ($\rho_i < 1$) for channels experiencing time-varying noise.
	
	\section{Simulations}
	This section evaluates the proposed algorithms through four numerical examples.
	\subsection{Example 1}
	\label{identitysim}
	We consider the following tracking problem:
	\begin{equation}
		\begin{aligned}
			x_{k+1}&=A x_k + Bw_k \\
			y_k &= C x_k +v_k
		\end{aligned}
		\label{TargetTracking}
	\end{equation}
	where $A=\begin{bmatrix}1&T\\
		0&1\end{bmatrix}$, $B=\begin{bmatrix}0.5T^2\\
		T\end{bmatrix}$, $w_k\sim \mathcal{N}(0,1)$, $C=\begin{bmatrix}1&0\end{bmatrix}$, and $T=0.01$ is the sampling time. We consider the following two types of measurement noise:
	\[
	\begin{cases}
		\text{Case 1:\quad} v_k \sim 0.95\mathcal{N}(0,0.1)+0.05\mathcal{N}(0,10)\\
		\text{Case 2:\quad} v_k \sim \mathcal{N}(0,R_{t}),
	\end{cases}
	\]
	where $R_{t}= [2(\sin(0.04\pi k T))^{2}+1]R$ and $R=0.1$ in Case 2. 
	
	In Case 1, we conduct simulations investigating the effects of $\nu_3$ in STKF and the result is shown in Fig. \ref{rmsenu}. One can see that the error performance of the STKF first decays with the increment of $\nu_3$ and then increases slowly, and finally coincides with the KF. In robust filtering scenarios, a too small $\nu_i$ may induce instability and a too large $\nu_i$ would damage the robustify effect. 
	
	\begin{figure}[htp]
		\centering
		\includegraphics[width=0.65\columnwidth]{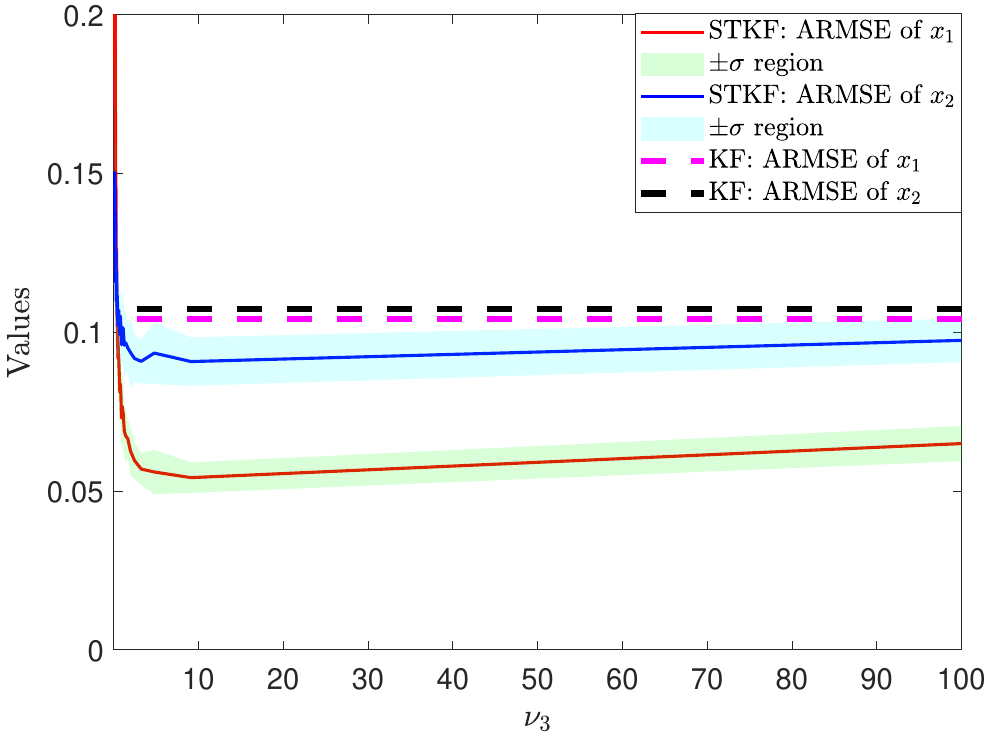}
		\caption{Average RMSE (ARMSE) with different $\nu_3$ in STKF.}
		\label{rmsenu}
	\end{figure}
	
	In Case 2 with adaptive measurement noise, we set the initial process and measurement covariance as $Q=BB^{T}$, $R=0.1$, and use $\rho=0.99$ in VBKF. As in STKF-A, we apply the same initial process and measurement covariance as is used in VBKF. Moreover, we set $\rho_1=\rho_2=1$, $\rho_3=0.99$, $\tau_i^2=1$ for $i=1,2,3$, $\nu_1=\nu_2=10^{8}$, and $\nu_3=100$. The estimated covariance obtained by STKF-A and by VBKF, as well as the ground measurement covariance, are visualized in Fig. \ref{adapR}. The corresponding error performances are summarized in Table \ref{identTable}. We find that the performance of STKF-A is identical to VBKF, but with a smaller average iteration number, benefiting from the break condition as shown in Line 10 of Algorithm \ref{AlgSTKF}. Both STKF-A and VBKF are significantly better than KF. 
	\begin{figure}[!htp]
		\centering
		\includegraphics[width=0.6\columnwidth]{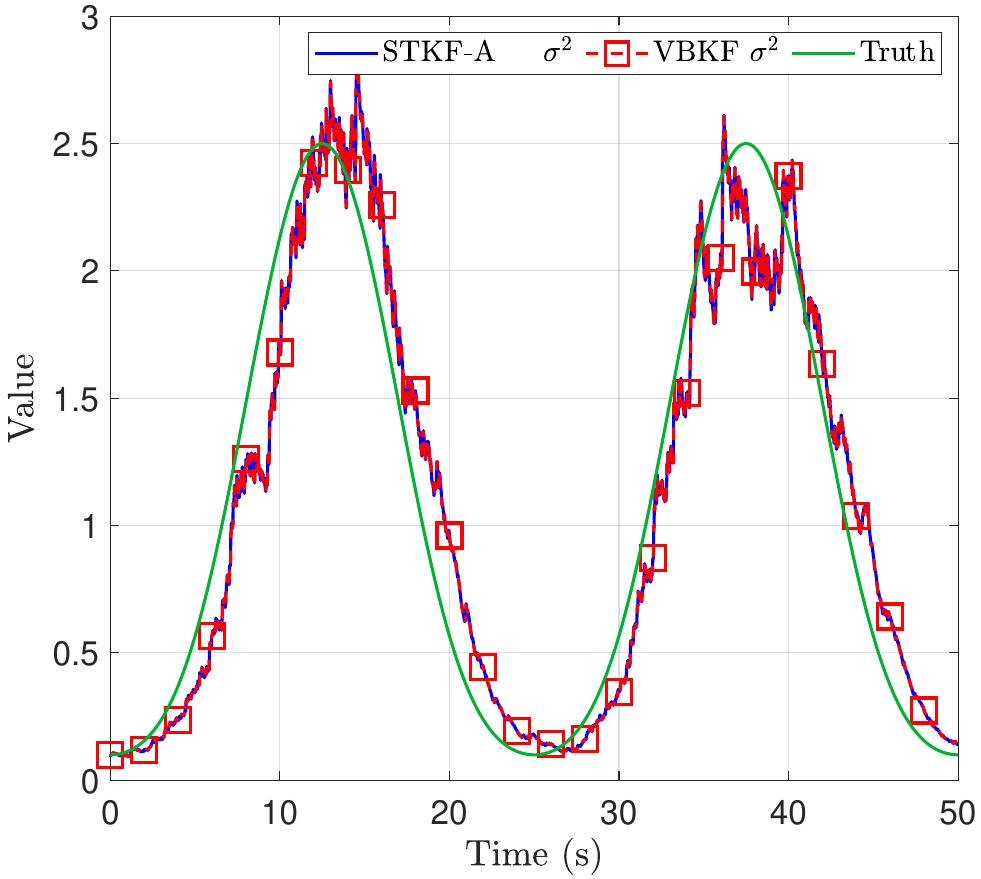}
		\caption{The measurement noise covariance (or variance) tracking performance of VBKF and STKF-A.}
		\label{adapR}
	\end{figure}
	\begin{table}[]
		\centering
		\caption{Performance comparison of different estimators in different cases.}
		\scalebox{0.7}{
			\begin{tabular}{c|cccc}
				\hline
				\hline
				\multirow{2}{*}{Scenario} & \multicolumn{1}{r}{\multirow{2}{*}{Estimators}} & \multirow{2}{*}{\begin{tabular}[c]{@{}c@{}}RMSE of \\ $x_1$\end{tabular}} & \multirow{2}{*}{\begin{tabular}[c]{@{}c@{}}RMSE of \\ $x_2$\end{tabular}} & \multirow{2}{*}{\begin{tabular}[c]{@{}c@{}} avg. iteration \\ number\end{tabular}} \\
				
				& \multicolumn{1}{r}{}   
				&                                                                         &                                                                         &                                                                              \\		
				\hline
				\multirow{3}{*}{Case 1}   & \multicolumn{1}{c}{VBKF-fixed}                  &     0.053                                                                    &    0.081                                                                     &     4                                                                         \\
				& \multicolumn{1}{c}{STKF}                        &     0.053                                                                    &    0.081                                                                     &    1.096                                                                         \\
				& {KF}                          &   0.111                                                                      &  0.129                                                                      &      1.0                                                                       \\
				\hline
				\multirow{3}{*}{Case 2}   & VBKF                                            &    0.092                                                 &    0.086                                            & 4                                                                            \\
				& STKF-A                                     &   0.092                                                &    0.086                                                &      2.190                                                                        \\
				& KF                                              &     0.130                                               &   0.143                                                &   1.0   \\                                                    
				\hline
				\hline                   
		\end{tabular}}
		\label{identTable}
	\end{table}
	\subsection{Example 2}
	Following system dynamics \eqref{TargetTracking}, we keep $\rho_1=\rho_2=1$ and investigate the effect of $\rho_3=\rho$  by considering the following step-like measurement covariance:
	\begin{equation}
		v_k \sim
		\begin{cases}
			\mathcal{N}(0,0.1), k \le 2000 \\
			\mathcal{N}(0,2.5), 2000 < k \le 4000 \\
			\mathcal{N}(0,0.1), k \ge 4000.
		\end{cases}
	\end{equation}
	In the simulation, we compare the convergence speed of $\rho=0.995$, $\rho=0.99$, $\rho=0.98$, $\rho=0.97$ in Fig. \ref{rho}, where the blue line denotes the ground truth variance, and the dashed black line is obtained by Theorems \ref{theorem5} and \eqref{covnspeed}. We observe that practical convergence speed fits with the theoretical analysis very well, which verified the theorem. We further visualize the convergence speed and the steady state variance with respect to $\rho$ and give the corresponding RMSEs in Fig. \ref{converho}. The results highlight a trade-off between the convergence speed and the steady state variance. 
	\begin{figure}[htp]
		\centering
		\includegraphics[width=1.0\columnwidth]{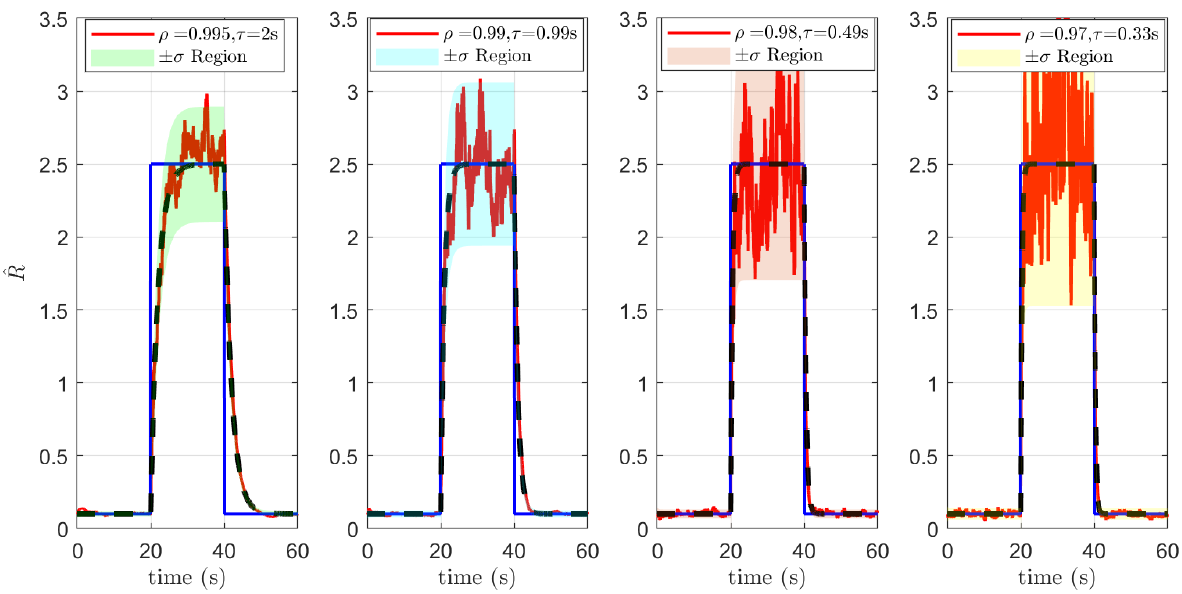}
		\caption{Theoretical (based on Theorem \ref{theorem5}) and practical variance convergence rate and the corresponding estimation variance with different $\rho$. The time constant is obtained by \eqref{covnspeed}, expressed in seconds.}
		\label{rho}
	\end{figure}
	\begin{figure}[!htp]
		\centering
		\subfigure[]{
			\begin{minipage}[t]{0.49\linewidth}
				\centering
				\includegraphics[width=1\columnwidth]{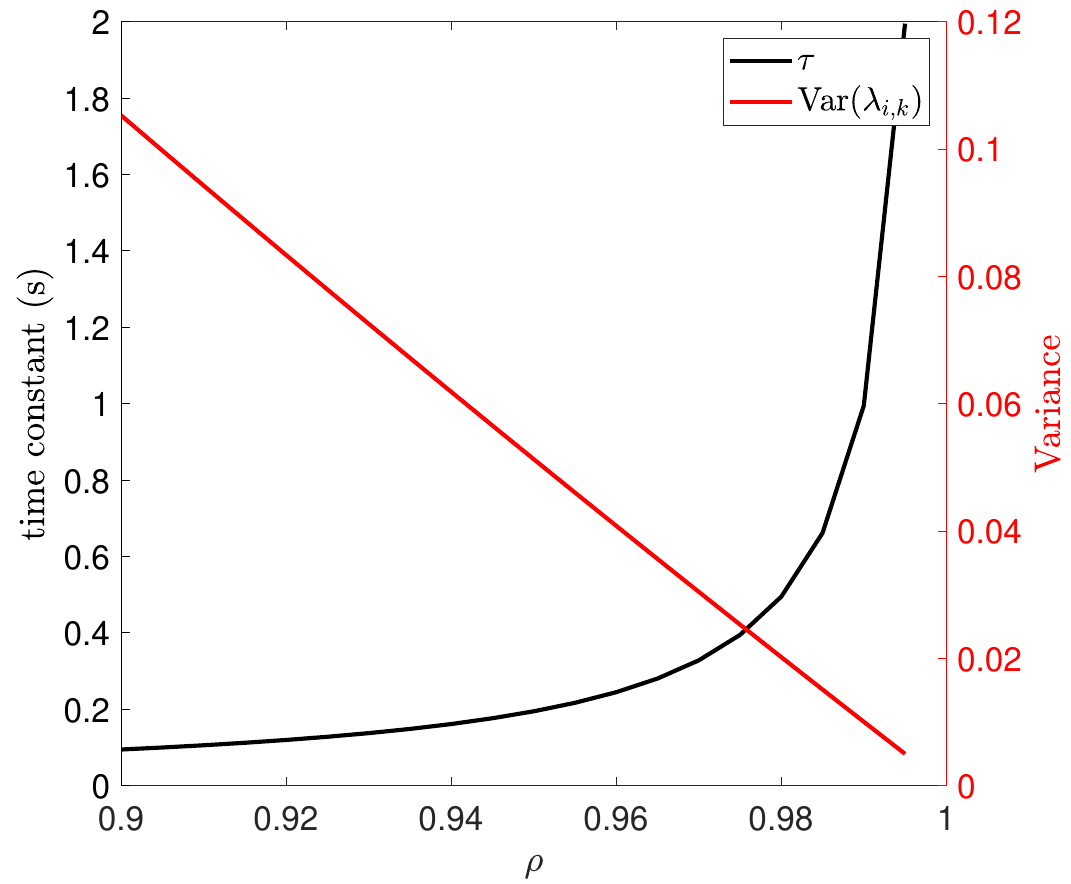}
				\label{TimeConstant}
			\end{minipage}%
		}%
		\subfigure[]{
			\begin{minipage}[t]{0.45\linewidth}
				\centering
				\includegraphics[width=1\columnwidth]{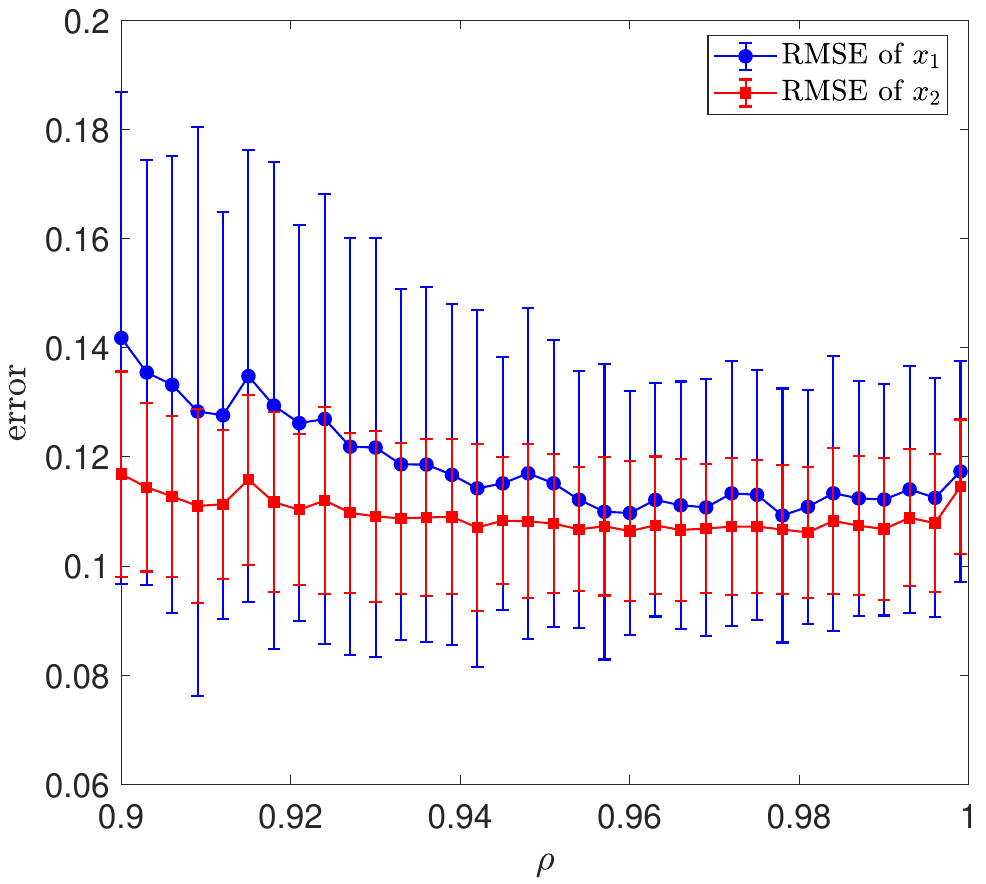}
				\label{RMSErho}
			\end{minipage}%
		}%
		\caption{The trade-off effects of $\rho$ in STKF-A. (a) The trade-off between convergence time constant and convergence variance regarding $\rho$. (b) The error performance with different $\rho$.}
		\label{converho}
	\end{figure}
	
	\subsection{Example 3}
	\label{super}
	We consider a 1-DOF torsion load system with unknown disturbances as given in \cite{zhao2016fusion,luan2025kalman}. The discrete system dynamics, with sampling time of $dt=0.01$ and maximum time step $N_{t}=2000$, are given by
	\begin{equation}
		\begin{aligned}
			{x}_{k}&={F}_k {x}_{k-1}+{G}_{1,k} {u}_{k-1}+{G}_{2,k} {d}_{k-1}+{w}_{k}\\
			{y}_k&={H}_k {x}_{k}+{v}_{k}
			\label{linearfun}
		\end{aligned}
	\end{equation}
	with
	\begin{equation}\nonumber
		\tiny
		\begin{aligned}
			F_k&=\begin{bmatrix}
				0.9205 &   0.0795 &   0.0085 &   0.0003\\
				0.2045 &   0.7955 &   0.0007 &   0.0085\\
				-14.3468 &  14.3468 &   0.6872 &   0.0746\\
				37.5370 & -37.5370 &   0.1863 &   0.6405
			\end{bmatrix}\\
			G_{1,k}&=\begin{bmatrix}
				0.0826,
				0.0031,
				15.5568,
				1.2100
			\end{bmatrix}^{T}\\
			G_{2,k}&=\begin{bmatrix}
				0.0031,
				0.2076,
				1.2100,
				38.7470
			\end{bmatrix}^{T} \\
			H_k&=\begin{bmatrix}
				1&0&0&0\\
				0&1&0&0
			\end{bmatrix}.
		\end{aligned}
	\end{equation}
	where $x_k=[\theta_m,\theta_t,v_m,v_t]^{T} \in \mathbb{R}^{4}$ represents the state vector, consisting of the angles at both the motor and load sides, as well as the velocities at the motor and load sides, $u_k \in \mathbb{R}$ is the motor torque, $d_k \in \mathbb{R}$ is the unknown disturbance, $y_k$ is the noisy angle measurements at both motor and load sides. We  augment the disturbance and the state as a new state and use a random walk model for the unknown disturbance dynamics by analogy to \cite{li2023generalized} to simultaneously estimate the disturbance and state, which gives 
	\begin{equation}
		\begin{aligned}
			\mathrm{x}_k &= \mathrm{A}_k \mathrm{x}_{k-1} + \mathrm{B}_k \mathrm{u}_{k-1} + \mathrm{w}_k \\
			\mathrm{y}_k &= \mathrm{C}_k \mathrm{x}_k  + \mathrm{v}_k
		\end{aligned}
		\label{sys0}
	\end{equation}
	where 
	\begin{equation}\nonumber
		\begin{aligned}
			\mathrm{x}_{k}&=\begin{bmatrix}
				d_{k} \\
				x_{k}
			\end{bmatrix}, \quad \mathrm{w}_{k}=\begin{bmatrix}
				w_{d,k} \\
				w_{x,k}
			\end{bmatrix} ,\quad \mathrm{A}_k=\begin{bmatrix}
				1   &0_{1 \times 4}\\
				G_{2,k} &F_k 
			\end{bmatrix},\\
			\mathrm{B}_k&=\begin{bmatrix}
				0\\
				G_{1,k}
			\end{bmatrix},\quad\mathrm{C}_k =\begin{bmatrix}
				{0}_{2 \times 1} & H_k
			\end{bmatrix}, \quad\mathrm{y}_k={y}_k.
		\end{aligned}
		\label{notdef}
	\end{equation}
	
	In the simulation, we consider the following three complex noise scenarios: (1) the process noise covariance is step-like where as the measurement covariance keeps constant; (2) the process noise is contaminated by outliers, where as the measurement noise covariance changes continually; (3) the process noise covariance keep constant, whereas the measurement noise covariance changes continually and is occasionally polluted by outliers, i.e.,
	\begin{equation}
		\tiny
		\begin{aligned}
			&\text{Case 1} \begin{cases}
				\mathrm{w}_k \sim \mathcal{N}(0,Q), 800\le k< 1200\\
				\mathrm{w}_k \sim \mathcal{N}(0,100Q), \text{otherwise}\\
				\mathrm{v}_k \sim \mathcal{N}(0,R), 
			\end{cases}
			\\
			&\text{Case 2}
			\begin{cases}
				\mathrm{w}_k \sim 0.99\mathcal{N}(0,Q)+0.01\mathcal{N}(0,900Q)\\
				\mathrm{v}_k \sim \mathcal{N}(0,R_t)
			\end{cases}\\
			&\text{Case 3} 	\begin{cases}
				\mathrm{w}_k \sim \mathcal{N}(0,Q)\\
				\mathrm{v}_k \sim 0.99\mathcal{N}(0,R_t)+0.01\mathcal{N}(0,900R)
			\end{cases}
		\end{aligned}
	\end{equation}
	with
	\begin{equation}
		\begin{aligned}
			Q&=\begin{bmatrix}
				0.01&0\\
				0&0.01\mathrm{I}_4
			\end{bmatrix},  R=0.5\mathrm{I}_2, \\
			R_t& = [2(\sin(0.04\pi k T))^{2}+1]R.
		\end{aligned}
		\label{noisemath}
	\end{equation}
	We compare the performance of VBKF, STKF-A (or STKF-AR), KF-DOB, as well as RBKF1, RBKF2, and RBKF3 (by analogously applying Rows 2, 3, and 4 of Table \ref{costfun} in Algorithm \ref{AlgSTKF}). Specifically, in all estimators, we apply the same nominal process covariance using $Q^{*}=Q$ and $R^{*}=R$, where $Q$ and $R$ are shown in \eqref{noisemath}. The hyper-parameters of different estimators are tuned based on the properties of their losses and the characteristics of the noise, and are summarized in Table \ref{comTable}. Results demonstrate that the proposed methods are the best among the all filters.
	
	\begin{table*}[]
		\centering
		\caption{Performance comparison of different estimators in complex noise scenarios.}
		\scalebox{0.75}{
			\begin{tabular}{c|ccccccc}
				\hline
				\hline
				Scenario& Estimators & Hyper-parameters & RMSE of  $x_1$ & RMSE of  $x_2$& RMSE of  $x_3$& RMSE of  $x_4$& RMSE of  $x_5$ \\
				
				\hline
				\multirow{6}{*}{Case 1}   & {VBKF} &$\rho=0.98$             &0.710  &2.609  &2.717  &71.101  &111.818                                                                      \\
				&{STKF-A}     & \scalebox{0.85}{\begin{tabular}{c}
						$\mathbf{\nu_p}=100\cdot \mathbf{1}_5$,~$\mathbf{\nu_r}=10^{8}\mathbf{1}_2$\\
						$\mathbf{\rho_p}=0.98\cdot\mathbf{1}_5$,~$\mathbf{\rho_r}=\mathbf{1}_2$\\
				\end{tabular}}              &0.507  &0.237  &0.249  &58.162  &64.819                                                                       \\
				& {KF}       & ${\emptyset}$                   &0.604  &0.589  &0.477  &55.350  &71.400                                                                   \\
				& {RBKF1}  &\scalebox{0.85}{ \begin{tabular}{c}
						$\mathbf{\nu_p}=2\cdot \mathbf{1}_5$,~
						$\mathbf{\nu_r}=10^{8}\cdot \mathbf{1}_2$\\
				\end{tabular}}
				&0.557  &0.258  &0.265  &55.888  &62.650  \\
				& {RBKF2}       &   \scalebox{0.85}{ \begin{tabular}{c}
						$\mathbf{\nu_p}=0.5\cdot \mathbf{1}_5$,~
						$\mathbf{\nu_r}=1.999\cdot \mathbf{1}_2$\\
				\end{tabular}}                &0.535  &0.254  &0.259  &55.664  &60.958   \\
				& {RBKF3}  & \scalebox{0.85}{ \begin{tabular}{c}
						$\mathbf{\nu_p}=1\cdot \mathbf{1}_5$,~
						$\mathbf{\nu_r}=10^{8}\cdot \mathbf{1}_2$\\
				\end{tabular}}       &0.555  &0.273  &0.271  &55.093  &60.377   \\
				\hline
				\multirow{6}{*}{Case 2}  &$\rho=0.98$ & VBKF  &0.408 &0.625 &0.863 &32.089 &58.726       \\
				& {STKF-AR}    &\scalebox{0.75}{\begin{tabular}{c}
						$\mathbf{\nu_p}=3\cdot \mathbf{1}_5$,~$\mathbf{\nu_r}=100\mathbf{1}_2$\\
						$\mathbf{\rho_p}=1\cdot\mathbf{1}_5$,~$\mathbf{\rho_r}=0.98\cdot\mathbf{1}_2$\\
				\end{tabular}}              &0.375 & 0.423 & 0.503 & 28.231 & 47.228                                                                      \\
				& KF & $\emptyset$  &0.468 & 0.416 & 0.515 & 28.315 & 49.509  \\         
				& {RBKF1}   &\scalebox{0.85}{ \begin{tabular}{c}
						$\mathbf{\nu_p}=2\cdot \mathbf{1}_5$,~
						$\mathbf{\nu_r}=2\cdot \mathbf{1}_2$\\
				\end{tabular}}  &0.392 & 2.151 & 2.263 & 35.268 & 62.601  \\
				& {RBKF2}   &\scalebox{0.85}{ \begin{tabular}{c}
						$\mathbf{\nu_p}=0.5\cdot \mathbf{1}_5$,~
						$\mathbf{\nu_r}=0.5\cdot \mathbf{1}_2$\\
				\end{tabular}}                     &0.380 & 0.743 & 0.978 & 32.881 & 59.835   \\
				& {RBKF3}  &\scalebox{0.85}{ \begin{tabular}{c}
						$\mathbf{\nu_p}= \mathbf{1}_5$,~
						$\mathbf{\nu_r}= \mathbf{1}_2$\\
				\end{tabular}}  &0.386 & 0.502 & 0.754 & 30.510 & 56.959   \\ 
				\hline
				\multirow{6}{*}{Case 3}   & VBKF  &$\rho=0.98$  &0.200 &0.359 &0.406 &15.085  &23.448          \\
				& {STKF-AR}&  \scalebox{0.85}{\begin{tabular}{c}
						$\mathbf{\nu_p}=10^{8}\cdot \mathbf{1}_5$,~$\mathbf{\nu_r}=100\mathbf{1}_2$\\
						$\mathbf{\rho_p}=\mathbf{1}_5$,~$\mathbf{\rho_r}=0.98 \cdot\mathbf{1}_2$\\
				\end{tabular}}  &0.203 &0.322 &0.389 &15.165  &22.948                                                                      \\
				& KF   & $\emptyset$      &0.359 &0.600 &0.524 & 21.320 & 34.781 \\         
				& {RBKF1}  &\scalebox{0.85}{ \begin{tabular}{c}
						$\mathbf{\nu_p}= 10^{8}\cdot\mathbf{1}_5$,~
						$\mathbf{\nu_r}= 2\cdot\mathbf{1}_2$\\
				\end{tabular}}  &0.229 &0.428 &0.508 &16.257  &24.640                                                                   \\
				& {RBKF2}  &\scalebox{0.85}{ \begin{tabular}{c}
						$\mathbf{\nu_p}= 1.999\cdot\mathbf{1}_5$,~
						$\mathbf{\nu_r}= 0.5 \cdot\mathbf{1}_2$\\
				\end{tabular}}   &0.210 &0.363 &0.425 & 15.576 &23.530    \\
				& {RBKF3}    &\scalebox{0.85}{ \begin{tabular}{c}
						$\mathbf{\nu_p}= 10^{8}\cdot\mathbf{1}_5$,~
						$\mathbf{\nu_r}= \mathbf{1}_2$\\
				\end{tabular}}   &0.215 &0.341 &0.411 &15.554  & 23.630  \\                                          
				\hline
				\hline                   
		\end{tabular}}
		\label{comTable}
	\end{table*}
	
	We further visualize the measurement covariance tracking of VBKF and STKF-AR in Case 2 and Case 3 in Figs. \ref{case2} and \ref{case3}, respectively. We find that in both cases, the conventional VBKF fails to track the ground truth noise covariance, where the tracking process is destroyed by the process outliers in Case 2 and the measurement outliers in Case 3. Our proposed STKF-AR mitigates these problems, since the proposed methods address both the outliers and adaptive noise in a unified framework. 
	
	\begin{figure}[!htp]
		\centering
		\subfigure[VBKF]{
			\begin{minipage}[t]{0.49\linewidth}
				\centering
				\includegraphics[width=1\columnwidth]{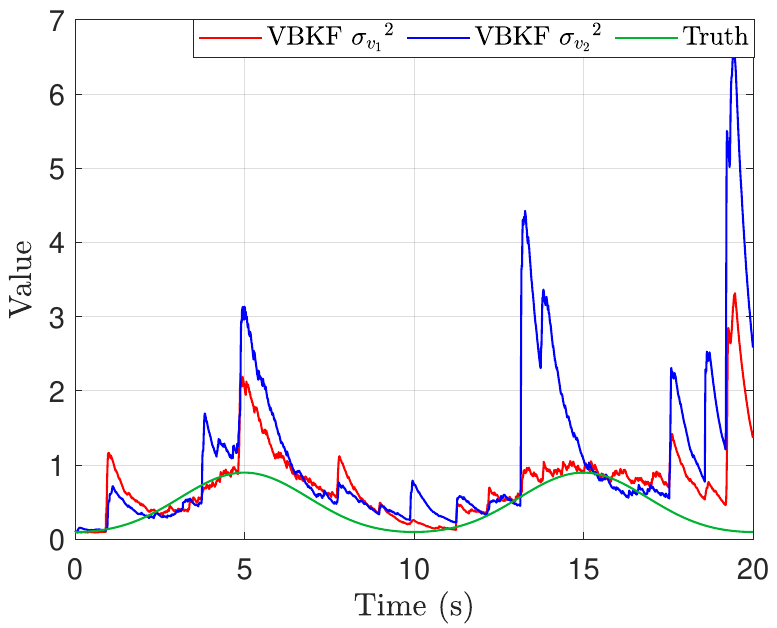}
				\label{case2vekf}
			\end{minipage}%
		}%
		\subfigure[STKF-AR]{
			\begin{minipage}[t]{0.49\linewidth}
				\centering
				\includegraphics[width=1\columnwidth]{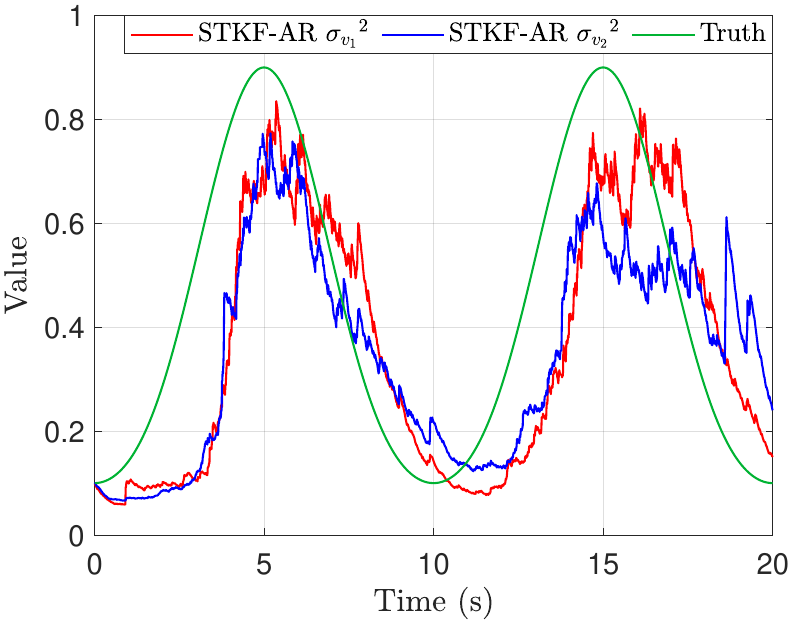}
				\label{case2stkf}
			\end{minipage}%
		}%
		\caption{The measurement covariance (or variance) tracking performance of VBKF and STKF-AR in Case 2. The blue and red lines denote the estimated variance, and the green line denote the ground truth variance. (a) The performance of VBKF. (b) The performance of STKF-AR.}
		\label{case2}
	\end{figure}
	
	\begin{figure}[!htp]
		\centering
		\subfigure[VBKF]{
			\begin{minipage}[t]{0.49\linewidth}
				\centering
				\includegraphics[width=1\columnwidth]{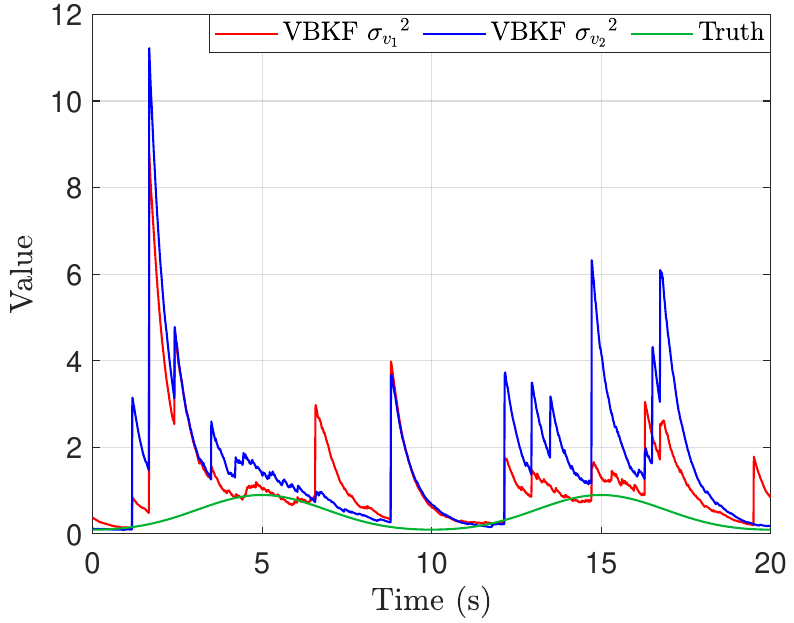}
				\label{case3vekf}
			\end{minipage}%
		}%
		\subfigure[STKF-AR]{
			\begin{minipage}[t]{0.49\linewidth}
				\centering
				\includegraphics[width=1\columnwidth]{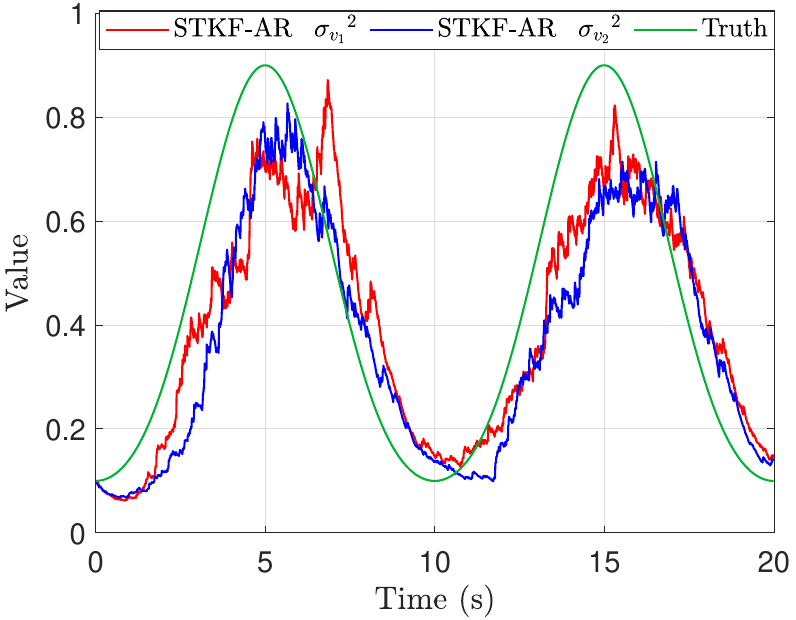}
				\label{case3stkf}
			\end{minipage}%
		}%
		\caption{The measurement covariance (or variance) tracking performance of VBKF and STKF-AR in Case 3. (a) The performance of VBKF. (b) The performance of STKF-AR.}
		\label{case3}
	\end{figure}
	\subsection{Example 4}
	We consider the proprioceptive quadruped localization problem by fusing robot kinematics and IMU measurements on slippery terrain~\cite{li2026interacting}. The system state at discrete time step $k$ is defined as:$$x_k = \{R_k, p_k, v_k, b^g_k, b^a_k, d_{0,k}, d_{1,k}, d_{2,k}, d_{3,k}\}$$where $R_k \in SO(3)$ denotes the base orientation; $p_k, v_k \in \mathbb{R}^3$ represent the base position and velocity, respectively; $b^g_k, b^a_k \in \mathbb{R}^3$ are the gyroscope and accelerometer biases; and $d_{i,k} \in \mathbb{R}^3$ (for $i \in \{0, 1, 2, 3\}$) represents the position of the $i$-th foot. The system dynamics are governed by
	\begin{equation}
		\begin{aligned}
			R_{k+1} &= R_k \exp\big( (\omega_k - b^g_k - \eta^g_k)_\times \Delta t \big), \\
			v_{k+1} &= v_k + \big( R_k (a_k - b^a_k - \eta^a_k) + g \big) \Delta t + \eta_{v,k}, \\
			b^g_{k+1} &= b^g_k + \eta_{bg,k}, \\
			b^a_{k+1} &= b^a_k + \eta_{ba,k},\\
			p_{k+1} &= p_k + v_k \Delta t + \eta_{p,k}, \\
			d_{i,k+1} &= d_{i,k} + v_{di,k} \Delta t + \eta_{di,k}
		\end{aligned}
	\end{equation}
	where $\omega_k$ and $a_k$ are the measured angular velocity and linear acceleration, respectively; $g$ is the gravity vector; $v_{di,k}$ is the foot velocity; $(\cdot)_\times$ denotes the skew-symmetric matrix operator; and $\eta^g_k, \eta^a_k, \eta_{bg,k}, \eta_{ba,k}, \eta_{v,k}, \eta_{p,k}, \eta_{di,k}$ represent mutually independent, zero-mean Gaussian white noise processes driving the respective states; $\delta=0.001$ s is the sampling time and the whole interval is $10$ s. The foot position measurements derived from the joint encoders are modeled as:$$d_{i,k} = f_{\text{kin},i}(q_k) + \epsilon_{i,k}, \quad i \in \{0, \dots, 3\}$$where $f_{\text{kin},i}(\cdot)$ denotes the forward kinematics function for the $i$-th leg, $q_k$ represents the joint configurations, and $\epsilon_{i,k}$ is the associated zero-mean Gaussian measurement noise.
	
	The propagation of the foot positions exhibits a contact-dependent switched behavior. During the swing phase, the foot velocity $v_{di,k}$ is unobservable. Consequently, we model it as $v_{di,k} \triangleq 0$ and assign an infinite process noise covariance ($Q_{di} \to \infty$) to reflect this uncertainty. During the stance phase, assuming a rolling contact model, the foot velocity is determined by $v_{di,k} = (\omega_{i,k})_{\times} r_k$. Here, $\omega_{i,k}$ is the angular velocity of the foot derived from the kinematics, and $r_k = [0, 0, r_c]^T$ is the vector from the ground contact point to the joint center, with $r_c$ being the effective rolling radius. Note that the Gaussian assumption for the process noise $\eta_{di,k}$ is violated in two primary scenarios: during the initial touchdown phase, where ground impact induces impulsive disturbances, and when the rolling contact assumption fails due to foot slippage. 
	
	We simulate a Unitree A1 quadruped traversing flat terrain interspersed with slippery belts in Gazebo, adopting the environment design from \cite{liao2023walking}. To address the impulsive process noise and potential slippage, we set $\nu_p = [10^8 \mathbf{1}_{15\times 1}^T, \mathbf{1}_{12\times 1}^T]^T$ and $\nu_r = 10^8 \mathbf{1}_{12\times 1}$. Meanwhile, we apply $\rho_p = [\mathbf{1}_{15\times 1}^T, \rho_d^T, \rho_d^T, \rho_d^T, \rho_d^T]^T$, where $\rho_d = [0.9, 0.9, 1]^T$, and $\rho_r = \mathbf{1}_{12\times 1}$. For the STKF baseline, we retain the same $\nu_p$ and $\nu_r$, but assign $\rho_p = \mathbf{1}_{27\times 1}$ and $\rho_r = \mathbf{1}_{12\times 1}$. We benchmark the proposed method against a conventional Error-State Kalman Filter (ESKF) \cite{li2026interacting}, and an Invariant Extended Kalman Filter (IEKF)~\cite{hartley2020contact}. The localization errors are detailed in Table~\ref{tab:rmse_comparison}, where the RMSEs of a vector $e_p$ is computed via $e_p=\|[e_{px},e_{py},e_{pz}]\|_2$ with $e_{px}$, $e_{py}$, and $e_{pz}$ are the RMSE of each axis. The corresponding trajectory visualizations are in Fig.~\ref{trajectory}. These results demonstrate the effectiveness of the proposed methods.
	\begin{table}[htbp]
		\centering
		\caption{RMSEs of the base position, velocity, and Euler angle of different estimators.}
		\label{tab:rmse_comparison}
		\scalebox{0.65}{
			\begin{tabular}{c c c c}
				\hline
				\hline
				\textbf{Alg} & \textbf{$e_p$ (m)} & \textbf{$e_v$ (m/s)} & \textbf{$e_\theta$ (deg)} \\
				\hline
				IEKF    & 0.0467 & 0.0383 & 0.4450 \\
				ESKF    & 0.1117 & 0.0382 & 0.5097 \\
				STKF    & 0.0348 & 0.0230 & 0.2477 \\
				STKF-AR & 0.0247 & 0.0245 & 0.2362 \\
				\hline
				\hline
		\end{tabular}}
	\end{table}
	\begin{figure}[htp]
		\centering
		\includegraphics[width=0.85\columnwidth]{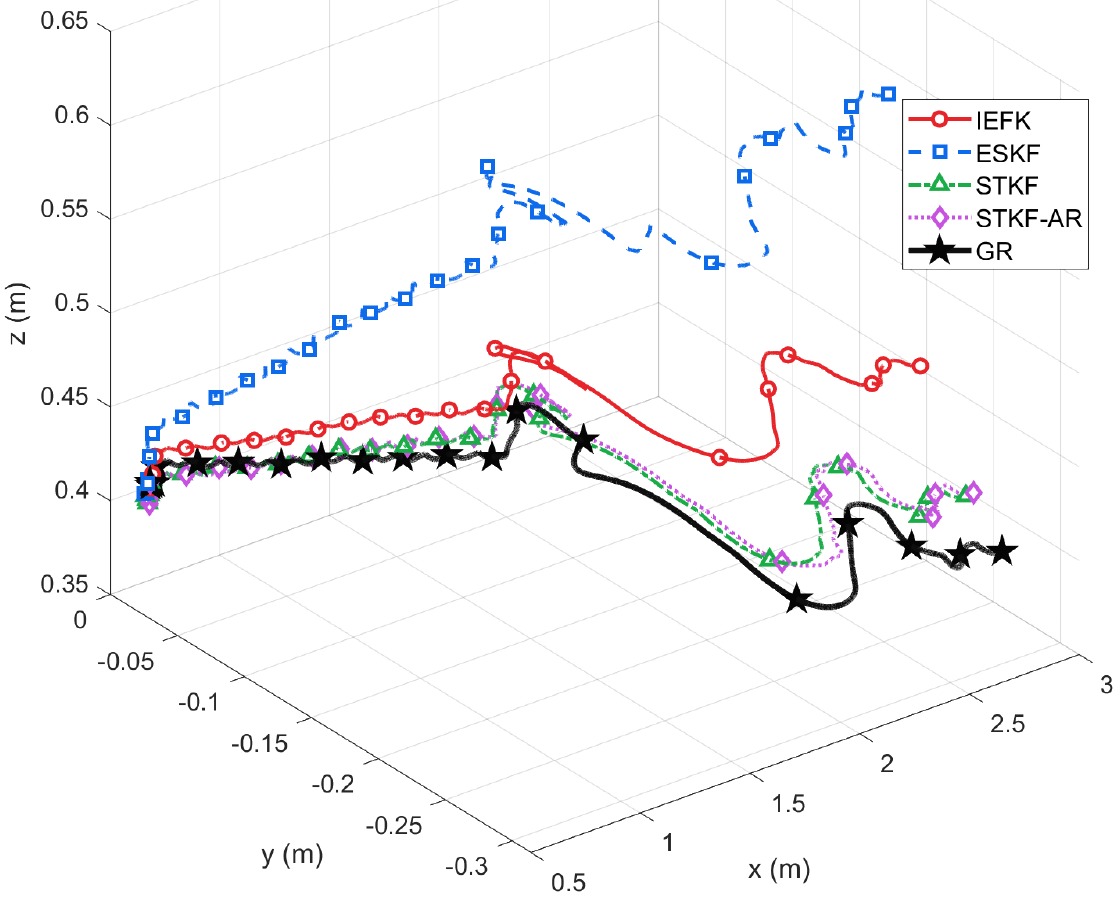}
		\caption{3D trajectory visualization of different estimators. The GR denotes the ground truth position.}
		\label{trajectory}
	\end{figure}
	\section{Conclusion}
	This work bridges the gap between the robust Kalman filter and the adaptive filter. Specifically, we prove that the STKF, derived by the Student's $t$-distribution induced loss and solved by fixed-point iteration, can be understood as a prerequisite of the VBKF. On this basis, we provide necessary conditions for a class of losses that can be solved by a fixed-point solution, which is much more computation-efficient than gradient-based solutions. Leveraging the variational technique, we derive two robust-adaptive filters, STKF-A and STKF-AR. We demonstrate that there is a trade-off between tracking speed and tracking variance in terms of covariance tracking in adaptive filters, highlighting the importance of selecting proper forgetting factors. Our proposed approaches can recover KF, STKF (robust filters), VBKF (adaptive filters), and can address complex noise scenarios with mixing outliers and adaptive noises.  Simulations verify the effectiveness of the proposed method. 
	
	\section{Appendix}
	\subsection{Appendix A}
	\label{proofproperty2}
	\begin{pf}
		According to $\lim \limits_{x\to 0}\log (1+x) =x$, it follows that
		\begin{equation}
			\lim \limits_{\nu\to \infty} \frac{\nu}{2}\log\left(1+ \frac{e^2}{\nu\tau^2}\right)=\frac{1}{2}\frac{e^2}{\tau^2}
		\end{equation}
		and hence $\mathcal{L}_{st}=\mathcal{L}_{gau}$ as $\nu \to \infty$. As $\nu=1$, one observes that $\mathcal{L}_{st}=\frac{1}{2}\log\left(1+\frac{e^2}{\tau^2}\right)$ which becomes Cauchy loss and corresponds to Cauchy distribution. This completes the proof.
	\end{pf}
	\subsection{Appendix B}
	\begin{pf}
		\label{proofproperty3}
		It is easy to obtain the Hessian matrix of $\mathcal{L}_{st}$ as follows
		\begin{equation}
			H({\mathcal{L}_{st}})=\frac{\partial^2 \mathcal{L}_{st}}{\partial e^2} = \frac{\nu(\nu\tau^2-e^2)}{(\nu\tau^2+e^2)^{2}}.
		\end{equation}
		It is obvious that $H({\mathcal{L}_{st}}) \ge 0$ as $e \in [-\sqrt{\nu}\tau,\sqrt{\nu}\tau]$ and $H({\mathcal{L}_{st}}) < 0$ otherwise. This completes the proof.
	\end{pf}
	\subsection{Appendix C}
	\begin{pf}
		\label{proofproperty4}
		According to \eqref{plambda}, one has 
		$$p(\lambda)=\frac{(\nu\tau^2/2)^{\nu/2}}{\Gamma(\nu/2)}(1 / \lambda)^{\nu/2+1} \exp \left(-\frac{\nu\tau^2}{2\lambda}\right).$$
		By applying Stirling's approximation, we have $\Gamma(\nu/2) \sim \sqrt{2\pi}(\frac{\nu}{2})^{\nu/2-0.5}\exp(-\nu/2)$. It follows that
		\begin{equation}
			\scalebox{0.9}{$\displaystyle
				\begin{aligned}
					\lim \limits_{\nu \to \infty}p(\lambda) &=\frac{(\nu\tau^2/2)^{\nu/2}}{\sqrt{2\pi}(\frac{\nu}{2})^{\nu/2-0.5}\exp(-\nu/2)} (1 / \lambda)^{\nu/2+1} \exp \left(-\frac{\nu\tau^2}{2\lambda}\right)\\
					&=\sqrt{\frac{\nu}{4\pi}} \frac{1}{\lambda} \left( \frac{\tau^2}{\lambda} \exp\left(1 - \frac{\tau^2}{\lambda}\right) \right)^{\nu/2}.
				\end{aligned}
				$}
		\end{equation}
		Let $f(\lambda) = \frac{\tau^2}{\lambda} \exp(1 - \frac{\tau^2}{\lambda})$. It is easy to verify that $f(\lambda)$ reaches its unique global maximum at $\lambda = \tau^2$, where $f(\tau^2) = 1$. For any $\lambda \neq \tau^2$, we have $0 < f(\lambda) < 1$. Consequently, as $\nu \to \infty$, the term $f(\lambda)^{\nu/2}$ approaches $0$ everywhere except at $\lambda = \tau^2$. Meanwhile, the prefactor $\sqrt{\frac{\nu}{4\pi}} \frac{1}{\lambda}$ goes to $\infty$ at $\lambda = \tau^2$, ensuring the integral over the probability density space remains $1$. This indicates that $p(\lambda)$ converges to a shifted Dirac delta function $\delta(\lambda-\tau^2)$. This completes the proof.
	\end{pf}
	\subsection{Appendix D}
	\label{pftheorem1}
	\begin{pf}
		Denoting $R_{ww}=W_k^{T}D_k W_k$ and $R_{wt}=W_k^{T}D_k t_k$, it follows that $f(x)=R_{ww}^{-1}R_{wt}$ according to \eqref{fixedpoint}. Since the induced norm is compatible with vector $\ell_p$ norm, we have
		\begin{equation}
			\|f(x)\|_1=\|R_{ww}^{-1}R_{wt}\|_1 \le \|R_{ww}^{-1}\|_1\|R_{wt}\|_1.
			\label{fixp}
		\end{equation}
		According to matrix theory, we have
		\begin{equation}
			\tiny
			\begin{aligned}
				\|R_{ww}^{-1}\|_1 &\le \sqrt{n} \|R_{ww}^{-1}\|_2 = \sqrt{n}\lambda_{\max}[R_{ww}^{-1}]= \frac{\sqrt{n}}{\lambda_{\min}[R_{ww}]} \\
				&=\frac{\sqrt{n}}{\lambda_{\min}[\sum_{i=1}^{l}w_i^{T}d_{\nu_i}(e_i) w_i]} \\
				&\overset{(\mathrm{I})}{\le}  \frac{\sqrt{n}}{\lambda_{\min}[\sum_{i=1}^{l}w_i^{T}d_{\nu_i}(|t_i|+\gamma |w_i^{T}|_1) w_i]},
			\end{aligned}
			\label{RWW}
		\end{equation}
		where (I) comes from $|e_i|=|t_i-w_i x| \le |t_i|+ \gamma \|w_i\|_1$. Furthermore, it holds that 
		\begin{equation}
			\begin{aligned}
				\|R_{wt}\|_1&=\|\sum_{i}^{l} w_i^{T}d_{\nu_i}(e_i)t_i\|_1  \overset{(\mathrm{II})}{\le} \sum_{i}^{l}\frac{1}{\tau_i^2}\|w_i^{T}\|_1 |t_i|,
			\end{aligned}
			\label{rwt}
		\end{equation}
		where $(\mathrm{II})$ comes form the fact $d_{\nu_i}(e_i) \le \frac{1}{\tau_i^2}$. Substituting the expression of $R_{ww}$ and $R_{wt}$ into \eqref{fixp}, one arrives
		\begin{equation}
			\|f(x)\|_{1} \le \boldsymbol{\phi}(\boldsymbol{\nu})= \frac{\sqrt{n}\sum_{i}^{l}\frac{1}{\tau_i^2}\|w_i^{T}\|_1 |t_i| }{\lambda_{\min}[\sum_{i=1}^{l}w_i^{T}d_{\nu_i}(|t_i|+\gamma |w_i^{T}|_1) w_i]}.
			\label{fxb}
		\end{equation}
		In the case $\nu_1=\nu_2=\ldots=\nu_{l}=\nu$, $\boldsymbol{\phi}(\boldsymbol{\nu})$ degenerates to  $\phi(\nu)$ as follows:
		\begin{equation}
			\phi(\nu)= \frac{\sqrt{n}\sum_{i=1}^{l}|t_i|\|w_i^{T}\|_1}{\lambda_{\min}\Big{[}\sum_{i=1}^{l} d_{\nu}(\gamma \|w_i^{T}\|_1+t_i)w_i^{T}w_i\Big{]}},
		\end{equation} 
		which is a \emph{decreasing function} of $\nu$. Moreover, we have 
		$$
		\lim \limits_{\nu \to \infty} \phi(\nu) = \xi = \frac{\sqrt{n}\sum_{i}^{l}\frac{1}{\tau_i^2}\|w_i^{T}\|_1 |t_i| }{\lambda_{\min}[\sum_{i=1}^{l} \frac{1}{\tau_i^2} w_i^{T} w_i]}, 	\lim \limits_{\nu \to 0_{+}} \phi(\nu) = \infty.
		$$
		This indicates that if $\gamma > \xi$, $\phi(\nu) = \gamma $ always has a unique solution $\nu^{*}$ over $[0,\infty]$. Subsequently, we consider a much more general case $\nu_i \ge \nu^{*}$ for $i=1,2,\ldots,l$. One observes that $w_i^{T}d_{\nu_i}(|t_i|+\gamma |w_i^{T}|_1) w_i$ is a positive diagonal matrix and $d_{\nu_i}(\cdot) \ge d_{\nu^{*}}(\cdot)$ if $\nu_i \ge \nu^{*}$. It follows that 
		$$
		\|f(x)\|_{1} \le \phi(\boldsymbol{\nu}) \le \phi(\nu^{*})= \gamma.
		$$
		This completes the proof.
	\end{pf}
	\subsection{Appendix E}
	\label{pftheorem2}
	\begin{pf}
		To prove $\|\nabla_{x} f(x)\| \le \eta$, it is sufficient to prove $\forall j$, $\|\frac{\partial }{\partial x_j}\|_{1} \le \eta$. Based on the fact that $$\frac{\partial \mathbf{U}^{-1}}{\partial \mathrm{x}}=-\mathbf{U}^{-1}\frac{\partial \mathbf{U}}{\partial \mathrm{x}}\mathbf{U}^{-1},~\frac{\partial \mathbf{U}\mathbf{V}}{\partial \mathrm{x}}= \frac{\partial \mathbf{U}}{\partial\mathrm{x}}\mathbf{V}+\mathbf{U}\frac{\partial \mathbf{V}}{\partial\mathrm{x}}$$ where $\mathbf{U}$ and $\mathbf{V}$ are matrices and $\mathrm{x}$ is a scalar, we have the following equation:
		\begin{equation}
			\begin{aligned}			\frac{\partial }{\partial {x}_j} f(x)&=\frac{\partial}{{x}_j}R_{ww}^{-1} R_{wt}\\
				&= -R_{ww}^{-1}\Big{[} \sum_{i=1}^{l}\frac{2 \nu_i w_{i,j} e_i}{(\nu_i \tau_i^{2}+e_i^{T}e_i)^{2}}{w}_i^{T}{w}_i\Big{]} f(x)\\
				&+R_{ww}^{-1} \Big{[}\sum_{i=1}^{l} \frac{2 \nu_i w_{i,j} e_i}{(\nu_i \tau_i^{2}+e_i^{T}e_i)^{2}}{w}_i^{T}{t}_i\Big{]},
				\label{gradient}
			\end{aligned}
		\end{equation}
		where $w_{i,j}$ is the $j$-th element of $w_i$ and $x_j$ is $j$-th element of vector $x$. Take one norm in both sides of \eqref{gradient}, we have 
		\begin{equation}
			\begin{aligned}
				\|\frac{\partial }{\partial x_j} f(x) \|_{1} 
				&\le     
				\Big{\|}-R_{ww}^{-1}\Big{[} \sum_{i=1}^{l}\frac{2 \nu_i w_{i,j} e_i}{(\nu_i \tau_i^{2}+e_i^{T}e_i)^{2}}{w}_i^{T}{w}_i\Big{]} f(x)\Big{\|_1} \\
				& + \Big{\|}R_{ww}^{-1} \Big{[}\sum_{i=1}^{l} \frac{2 \nu_i w_{i,j} e_i}{(\nu_i \tau_i^{2}+e_i^{T}e_i)^{2}}{w}_i^{T}{t}_i\Big{]} \Big{\|_1}.
				\label{gradient1}
			\end{aligned}
		\end{equation}
		For the first term on the right side of \eqref{gradient1}, we have
		\begin{equation}
			\tiny
			\begin{aligned}
				&\Big{\|}-R_{ww}^{-1}\Big{[} \sum_{i=1}^{l}\frac{2 \nu_i w_{i,j} e_i}{(\nu_i \tau_i^{2}+e_i^{T}e_i)^{2}}{w}_i^{T}{w}_i\Big{]} f(x)\Big{\|_1}  \\
				&\le 2\|R_{ww}^{-1}\|_1 \Big{\|}\Big{[} \sum_{i=1}^{l}\frac{\nu_i w_{i,j} e_i}{(\nu_i \tau_i^{2}+e_i^{T}e_i)^{2}}{w}_i^{T}{w}_i\Big{]} \Big{\|}_1 \|f(x)\|_1\\
				& \overset{(\mathrm{I})}{\le}2\gamma\|R_{ww}^{-1}\|_1  \sum_{i=1}^{l}\Big{\|}\frac{\nu_i w_{i,j} e_i}{(\nu_i \tau_i^{2}+e_i^{T}e_i)^{2}} {w}_i^{T}{w}_i \Big{\|}_1 \\
				&\overset{(\mathrm{II})}{\le} 2\gamma\|R_{ww}^{-1}\|_1   \sum_{i=1}^{l}\frac{|t_i|+\gamma\|w_i^{T}\|_{1}}{\nu_i \tau_i^4}\|w_i^{T}\|_1\|w_i^{T}w_i\|_1,
				\label{gra_part1}
			\end{aligned}
		\end{equation}
		where (I) comes from the convexity of vector $\ell_1$ norm and $f(x) \le \gamma$, and (II) comes from $|w_{i,j}e_i| \le (|t_i|+\gamma \|w_i\|_1 )\|w_i^{T}\|_1$ and $\frac{\nu_i}{(\nu_i \tau_i^{2}+e_i^{T}e_i)^{2}} \le \frac{1}{\nu_i \tau_i^{4}}$.
		Similarly, we have
		\begin{equation}
			\tiny
			\begin{aligned}
				&\Big{\|}R_{ww}^{-1} \Big{[}\sum_{i=1}^{l} \frac{2 \nu_i w_{i,j} e_i}{(\nu_i \tau_i^{2}+e_i^{T}e_i)^{2}}{w}_i^{T}{t}_i\Big{]} \Big{\|_1} \\
				&\le 2\|R_{ww}^{-1}\|_{1} \sum_{i=1}^{l} \frac{|t_i|+\gamma\|w_i^{T}\|_{1}}{\nu_i \tau_i^4} \|w_i^{T}\|_1\|w_i^{T}t_i\|_1.
				\label{gra_part2}
			\end{aligned}
		\end{equation}
		Substituting \eqref{RWW}, \eqref{gra_part1}, and \eqref{gra_part2} into \eqref{gradient1}, we obtain
		\begin{equation}
			\begin{aligned}
				&\|\frac{\partial }{\partial x_j} f(x) \|_{1} \le \boldsymbol{\psi}({\boldsymbol{\nu}})\\&= \frac{2\sqrt{n} \sum_{i=1}^{l}\frac{|t_i|+\gamma\|w_i^{T}\|_{1}}{\nu_i \tau_i^{4}} \|w_i^{T}\|_1  \big{(}\gamma\|w_i^{T}w_i \|_1+ \|w_i^{T}t_i\|_1 \big{)} }{\lambda_{\min}[\sum_{i=1}^{l}w_i^{T}d_{\nu_i}(|t_i|+\gamma |w_i^{T}|_1) w_i]}.
				\label{psibar}
			\end{aligned}
		\end{equation}
		By setting $\nu_i=\nu$ for all $i$, the function $\boldsymbol{\psi}({\boldsymbol{\nu}})$ degenerates to \eqref{con2}. In such cases, one has
		\begin{equation}
			\begin{aligned}
				&\|\frac{\partial }{\partial x_j} f(x) \|_{1} \le {\psi}({\nu}).
				\label{psi}
			\end{aligned}
		\end{equation}
		One can see that \eqref{psi} is a continuous and strictly decreasing function satisfying $\lim\limits_{\nu \to 0^{+}} \psi(\nu) = \infty$ and $\lim\limits_{\nu \to \infty} \psi(\nu) = 0$. This implies that $\psi(\nu)=\eta$ has a unique solution $\nu^{+}$ and $\psi(\nu) \le \eta$ if $
		\nu \ge
		\nu^{+}$. Observing \eqref{psibar} and \eqref{psi}, we have $\boldsymbol{\psi}(\boldsymbol{\nu}) \le \psi(\nu^{+})$ if $\nu_i \ge \nu^{+}$ for all $i$. This indicates that $0<\boldsymbol{\psi}(\boldsymbol{\nu}) \le \eta$ if $\forall i, \nu_i \ge \nu^{+}$. This completes the proof.
	\end{pf}
	\subsection{Appendix F}
	\label{pftheorem3}
	\begin{pf}
		One observes that \emph{Condition 1} ensures that $J_{\boldsymbol{\nu}}(e)$ is a well-defined loss function. \emph{Condition 2} guarantees the validity of \eqref{fxb}. \emph{Condition 3} guarantees the validity of \eqref{gra_part2}. \emph{Condition 4} (combined with the above three conditions) guarantees the hold of the inequality in \eqref{convergence} by analogy with the proof of Theorems \ref{theorem1} and \ref{theorem2}. This completes the proof.
	\end{pf}
	\subsection{Appendix G}
	\label{vbderivation}
	The system dynamics, according to \eqref{linreg}, are as follows:
	\begin{equation}
		\begin{aligned}
			t_k = W_k x_k + \zeta_k.
		\end{aligned}
		\label{linreg1}
	\end{equation}
	We follow the standard VB-approach and approximate the joint distribution using a factored form as follows:
	\begin{equation}
		p(x_k, R_{\zeta\zeta_k} | t_k) \approx Q_x(x_k) Q_R(R_{\zeta\zeta_k}).
	\end{equation}
	Then, one can minimize the Kullback-Leibler (KL) divergence between the approximated distribution and the true posterior:
	\begin{equation}
		\begin{aligned}
			&\operatorname{KL}(Q_x Q_R || p(x_k, R_{\zeta\zeta_k} | t_k)) \\
			&= \iint Q_x Q_R \log \left( \frac{Q_x Q_R}{p(x_k, R_{\zeta\zeta_k} | t_k)} \right) \mathrm{d} x_k \mathrm{d} R_{\zeta\zeta_k}.
		\end{aligned}
	\end{equation}
	Minimizing the KL divergence with respect to  $Q_R(R_{\zeta\zeta_k})$ gives
	\begin{equation}
		\begin{aligned}
			Q_R(R_{\zeta\zeta_k}) &\propto  \exp (\mathbb{E}_{Q_x}[\log p(x_k, R_{\zeta\zeta_k}, t_k)]).
		\end{aligned}
	\end{equation}
	Letting $\lambda_{i,k}$ denote the $i$-th diagonal element of $R_{\zeta\zeta_k}$, and dropping terms independent of $R_{\zeta\zeta_k}$, one has:
	\begin{equation}
		\tiny
		\begin{aligned}
			\mathbb{E}_{Q_x}&[\log p(x_k, R_{\zeta\zeta_k}, t_k)] \propto \mathbb{E}_{Q_x}[\log p(t_k | x_k, R_{\zeta\zeta_k})] + \log p(R_{\zeta\zeta_k})\\
			&= -\sum_{i=1}^{l} \left( \frac{\nu_{i,k}^- + 1}{2} + 1 \right) \ln (\lambda_{i,k}) \\
			&\quad - \sum_{i=1}^{l} \frac{1}{\lambda_{i,k}} \left( \frac{\nu_{i,k}^- (\tau_{i,k}^2)^-}{2} + \frac{1}{2} \langle (t_k - W_k x_k)_i^2 \rangle_x \right) ,
		\end{aligned}
		\label{ex}
	\end{equation}
	where $\langle \cdot \rangle_x = \int (\cdot) Q_x(x_k) \mathrm{d} x_k$. By evaluating the expectations in \eqref{ex}, and matching the parameters of the distributions on the left and right hand sides to the inverse gamma distribution, we obtain
	\begin{equation}
		\begin{aligned}
			Q_R(R_{\zeta\zeta_k}) &= \prod_{i=1}^{l} \operatorname{Inv-Gam} \left( \lambda_{i,k} \bigg| \frac{\nu_{i,k}^+}{2}, \frac{\nu_{i,k}^+ (\tau_{i,k}^2)^+}{2} \right),
		\end{aligned}
		\label{des}
	\end{equation}
	where $\hat{x}_k$ and $P_k$ are obtained by the fixed-point iteration by solving \eqref{fixedpoint}, and the updated Inverse-Gamma parameters $\nu_{i,k}^+$ and $(\tau_{i,k}^2)^+$ are given by the following equations:
	\begin{equation}
		\begin{aligned}
			\nu_{i,k}^{+} &= \nu_{i,k}^{-} + 1, \\
			(\tau_{i,k}^2)^+ &= \frac{\nu_{i,k}^-}{\nu_{i,k}^+} (\tau_{i,k}^2)^{-} + \frac{e_{i,k}^2 + [W_k P_k W_k^{T}]_{ii}}{\nu_{i,k}^{+}}.
		\end{aligned}
	\end{equation}
	This completes the proof.
	\subsection{Appendix H}
	\label{appenB}
	According to the inverse-Gamma distribution in \eqref{plambda}, we have
	\begin{equation}
		\ln p(\lambda) = -\left(\frac{\nu}{2}+1\right)\ln\lambda - \frac{\nu\tau^2}{2\lambda} + C,
	\end{equation}
	where $C$ is a constant independent of $\lambda$. The mode $\lambda_0$ is obtained by setting the first derivative to zero:
	\begin{equation}
		\left.\frac{d}{d \lambda} \ln p(\lambda)\right|_{\lambda_0} = -\frac{\nu+2}{2\lambda_0} + \frac{\nu\tau^2}{2\lambda_0^2} = 0 \implies \lambda_0 = \frac{\nu\tau^2}{\nu+2}.
	\end{equation}
	Because $\lambda_0$ is a local maximum, the first-order term in the Taylor expansion vanishes exactly. The second derivative evaluated at the mode is:
	\begin{equation}
		\left.\frac{d^2}{d \lambda^2} \ln p(\lambda)\right|_{\lambda_0} = \frac{\nu+2}{2\lambda_0^2} - \frac{\nu\tau^2}{\lambda_0^3} = -\frac{(\nu+2)^3}{2\nu^2\tau^4}.
	\end{equation}
	Consequently, the Taylor expansion of $\ln p(\lambda)$ simplifies to:
	\begin{equation}
		\ln p(\lambda) \approx \ln p(\lambda_0) - \frac{(\nu+2)^3}{4\nu^2\tau^4}(\lambda-\lambda_0)^2.
	\end{equation}
	By matching this quadratic form to the logarithm of a Gaussian probability density function $\ln q(\lambda;\mu,\sigma^2) = -\frac{(\lambda-\mu)^2}{2\sigma^2} + C^\prime$, we obtain the approximated normal distribution:
	\begin{equation}
		q(\lambda) \sim \mathcal{N}\left(\frac{\nu\tau^2}{\nu+2}, \frac{2\nu^2\tau^4}{(\nu+2)^3}\right).
	\end{equation}
	\subsection{Appendix I}
	\label{prooftheorem5}
	\begin{pf}
		In steady state ($k \to \infty$), the sequence converges to a weakly stationary random variable $\lambda_{i,\infty}^{+}$ satisfying:
		\begin{equation*}
			\lambda_{i,\infty}^{+} = \rho_i \lambda_{i,\infty}^{+} + (1-\rho_i)\mathcal{E}_{i,\infty}
		\end{equation*}
		
		\textbf{Mean:} Taking the expectation of both sides and substituting $\mathbb{E}[\mathcal{E}_{i,\infty}] = \sigma_i^2$ yields:
		\begin{equation*}
			\mathbb{E}[\lambda_{i,\infty}^{+}] = \rho_i \mathbb{E}[\lambda_{i,\infty}^{+}] + (1-\rho_i)\sigma_i^2 \implies \mathbb{E}[\lambda_{i,\infty}^{+}] = \sigma_i^2.
		\end{equation*}
		
		\textbf{Variance:} Because $\lambda_{i,k-1}^{+}$ strictly depends on past errors, it is uncorrelated with the current error $\mathcal{E}_{i,k}$. Taking the variance of both sides gives:
		\begin{equation*}
			\mathrm{Var}(\lambda_{i,\infty}^{+}) = \rho_i^2 \mathrm{Var}(\lambda_{i,\infty}^{+}) + (1-\rho_i)^2 \mathrm{Var}(\mathcal{E}_{i,\infty}).
		\end{equation*}
		Substituting $\mathrm{Var}(\mathcal{E}_{i,\infty}) = \eta_i^4$ and isolating $\mathrm{Var}(\lambda_{i,\infty}^{+})$, we obtain:
		\begin{equation*}
			\mathrm{Var}(\lambda_{i,\infty}^{+}) = \frac{(1-\rho_i)^2}{1-\rho_i^2} \eta_i^4 = \frac{1-\rho_i}{1+\rho_i} \eta_i^4.
		\end{equation*}
	\end{pf}
	
	\bibliographystyle{IEEEtranN}
	\bibliography{reference}
\end{document}